\definecolor{crimson}{RGB}{192,0,0}         % color crimson
\definecolor{navy}{RGB}{47,85,151}         % color crimson
\newif\if@restonecol
\newif\if@restonecol
\theoremstyle{plain}
\newtheorem{thm}{Theorem}
\newtheorem{coro}{Corollary}
\theoremstyle{plain}
\newtheorem{rem}{Remark}
\begin{document}

\title{Cell-Free Massive MIMO with Multi-Antenna Users over Weichselberger Rician Channels
\thanks{X. Li, J. Zhang, and Z. Wang are with the School of Electronic and Information Engineering and the Frontiers Science Center for Smart High-speed Railway System, Beijing Jiaotong University, Beijing 100044, China. (e-mail: \{19211271, jiayizhang, zhewang\_77\}@bjtu.edu.cn).}
\thanks{B. Ai is with the State Key Laboratory of Rail Traffic Control and Safety, Beijing Jiaotong University, Beijing 100044, China, and also with the Frontiers Science Center for Smart High-speed Railway System, and also with Henan Joint International Research Laboratory of Intelligent Networking and Data Analysis, Zhengzhou University, Zhengzhou 450001, China, and also with Research Center of Networks and Communications, Peng Cheng Laboratory, Shenzhen, China (e-mail: boai@bjtu.edu.cn).}
\thanks{D. W. K. Ng is with School of Electrical Engineering and Telecommunications, University of New South Wales, NSW 2052, Australia (e-mail: w.k.ng@unsw.edu.au).}}
\author{Xin~Li, Jiayi~Zhang,~\IEEEmembership{Senior Member,~IEEE,} Zhe~Wang, \\ Bo~Ai,~\IEEEmembership{Fellow,~IEEE,} and Derrick Wing Kwan Ng,~\IEEEmembership{Fellow,~IEEE}}
\maketitle

\begin{abstract}
We consider a cell-free massive multiple-input multiple-output (MIMO) system with multi-antenna access points and user equipments (UEs) over Weichselberger Rician fading channels with random phase-shifts. More specifically, we investigate the uplink spectral efficiency (SE) for two pragmatic processing schemes: 1) the fully centralized processing scheme with global minimum mean square error (MMSE) or maximum ratio (MR) combining; 2) the large-scale fading decoding (LSFD) scheme with local MMSE or MR combining. To improve the system SE performance, we propose a practical uplink precoding scheme based on only the eigenbasis of the UE-side correlation matrices. Moreover, we derive novel closed-form SE expressions for characterizing the LSFD scheme with the MR combining. Numerical results validate the accuracy of our derived expressions and show that the proposed precoding scheme can significantly improve the SE performance compared with the scenario without any precoding scheme.
\end{abstract}

\begin{IEEEkeywords}
Cell-free massive MIMO, Rician fading, spectral efficiency, uplink precoding.
\end{IEEEkeywords}

\IEEEpeerreviewmaketitle

\section{Introduction}

As one of the most promising technologies for enabling future communications, cell-free massive multiple-input multiple-output (CF mMIMO) has attracted extensive attention recently \cite{smallcell,9622183,Zhang2020JSAC,9650567}. In CF mMIMO systems, a large number of access points (APs), all connected to a central processing unit (CPU), are geographically distributed and jointly serve the user equipments (UEs) via the same time-frequency resources \cite{9622183}. In fact, since the CF mMIMO systems can overcome the inter-cell interference in cellular networks and obtain high macro-diversity gain, it can greatly improve the spectral efficiency, energy efficiency, and service quality of UEs \cite{8097026}.

The vast majority of existing papers revealed the performance of CF mMIMO with single-antenna UEs, e.g. \cite{Making,HIzheng,performance}. However, UEs in the fifth-generation (5G) communication networks have already been equipped with multiple antennas to exploit the spatial degrees of freedom for improving the system performance. The authors of \cite{multiUE3} analyzed the downlink SE performance of CF mMIMO systems with/without the downlink pilot transmission. The authors of \cite{8901451} investigated a CF mMIMO system with a user-centric approach for multi-antenna UEs. Besides, the authors of \cite{multiUE2} analyzed the impact of the number of antennas per UE on the performance of CF mMIMO. Also, the authors of \cite{8811486} investigated a CF mMIMO system with multi-antenna UEs and low-resolution ADCs. However, these works are based on the overly idealistic and simple assumption of uncorrelated Rayleigh fading channels with limited practical applications. As a remedy, a more realistic channel model for multi-antenna UEs was proposed in \cite{Weichselberger}, known as the Weichselberger model, which can capture the joint-correlation dynamics between the AP and UE-side. Inspired by \cite{Weichselberger}, the UL performance of CF mMIMO systems over the Weichselberger Rayleigh model was first investigated in \cite{Zhe_TWC}.

Unfortunately, many of the existing works, e.g. \cite{HIzheng,Making,Zhe_TWC}, ignored the existence of the line-of-sight (LoS) path, which is the dominant channel component of practical CF mMIMO systems. A more practical spatially correlated Rician fading channel was investigated in \cite{performance} with both single-antenna APs and UEs. This considered channel is composed of a semi-deterministic LoS path component with random phase-shifts and a stochastic non-line-of-sight (NLoS) path component. Moreover, based on \cite{performance}, the authors of \cite{uplink_performance} analyzed the UL performance of CF mMIMO with multi-antenna APs focusing on single-antenna UEs. Yet, the obtained results from \cite{uplink_performance} are not applicable to the case of multi-antenna UEs commonly adopted nowadays due to the setting of single-antenna UEs.

Motivated by the aforementioned observations, we investigate a CF mMIMO system with multi-antenna APs and UEs over Weichselberger Rician fading channels with random phase-shifts. The major contributions of this paper are listed as follows.

\begin{itemize}
\item For the first time, we investigate the CF mMIMO systems over Weichselberger Rician fading channels with random phase-shifts in the LoS component for two practical processing schemes: 1) the fully centralized processing scheme with global minimum mean square error (MMSE) or maximum ratio (MR) combining; 2) the large-scale fading decoding (LSFD) scheme with local MMSE or MR combining.
    Moreover, novel and exact closed-form SE expressions are computed for the LSFD scheme with the MR combining.
\item To improve the system SE performance, we propose a heuristic uplink precoding scheme based on only the eigenbasis of the UE-side correlation matrices. The proposed uplink precoding scheme is shown to improve the SE performance effectively.
\end{itemize}

\textbf{\emph{Notation}}:
$\mathbb {E} \{ {\cdot} \} $ represents the expectation and $\| \cdot \|_{\mathrm{F}}^2$ denotes the squared Frobenius norm. The Kronecker product and the Hadamard product are denoted by $ \otimes $ and $ \odot $. ${\mathrm{vec}} (\bf X )$ is the column vector obtained by stacking the columns of a matrix $\bf X$. ${\mathcal{N}}_{\mathbb C}({\bf {0},\bf{R}})$ denotes the circularly symmetric complex Gaussian distribution with zero mean and the correlation matrix ${\bf R}$.

%\textbf{\emph{Notation}}:
%$\mathbb {E} \{ {\cdot} \} $ represents the expectation and $\| \cdot \|_{\mathrm{F}}^2$ denotes the squared Frobenius norm. The Kronecker product and the Hadamard product are denoted by $ \otimes $ and $ \odot $. ${\mathrm{vec}} (\bf X )$ is the column vector obtained by stacking the columns of a matrix $\bf X$. ${\mathcal{N}}_{\mathbb C}({\bf {0},\bf{R}})$ denotes the circularly symmetric complex Gaussian distribution with zero mean and the correlation matrix ${\bf R}$.

\section{System Model}

As illustrated in Fig. \ref{System_Model}, we consider a CF mMIMO system, where $M$ APs and $K$ UEs are randomly located within a large area with $L$ and $N$ being the number of antennas per AP and UE, respectively. We investigate the standard block fading model, where the channel responses remain constant in each coherence time-frequency block \cite{Network}. Specifically, each block has ${\tau _c}$ samples, where ${\tau _p}$ and ${\tau _u = \tau_c - \tau_p}$ samples are reserved for UL training and data transmission, respectively. Let ${{\bf{H}}_{mk}}\in {\mathbb{C}^{L \times N}}$ denote the complex-valued channel between the $k$-th UE and the $m$-th AP. We assume that ${{\bf{H}}_{mk}}$ are independent for different AP-UE pairs and ${{\bf{H}}_{mk}}$ in different blocks are independent and identically distributed (i.i.d.).

\begin{figure}[t]
\centering
\includegraphics[scale=0.8]{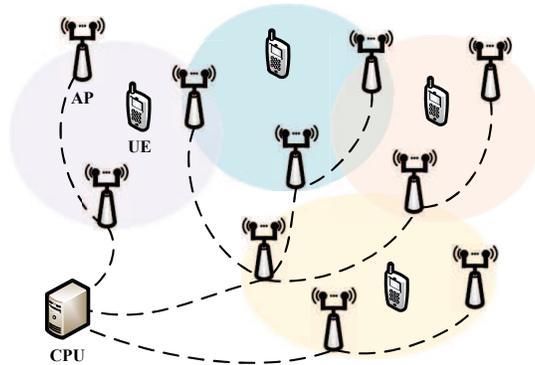}
\centering
\caption{A cell-free massive MIMO network.}\label{System_Model}
\vspace{-0.45cm}
\end{figure}

\subsection{Weichselberger Rician Fading Channel}

We consider the jointly-correlated Rician fading channel that is composed of a semi-deterministic LoS path component with a random phase-shift and a stochastic NLoS path component. Based on the Weichselberger model \cite{Weichselberger}, \cite{Zhe_TWC}, ${{\bf{H}}_{mk}}$ is modeled as
\begin{equation}
{{\bf{H}}_{mk}} = {\bar {\bf{H}} _{mk}}{e^{j{\varphi _{mk}}}} + \underset{\tilde{\bf{H}}_{mk}}{\underbrace{{{\bf{U}}_{mk,\text{r}}}\left( {{{\tilde {\bf{W}}}_{mk}} \odot {{\bf{H}}_{mk,\text{iid}}}} \right){\bf{U}}_{mk,\text{t}}^H}},
\end{equation}
where ${\bar {\bf{H}} _{mk}}\!=\! [{{\bar {\bf{h}} }_{mk1}}, \cdots ,{{\bar {\bf{h}} }_{mkN}}]$ is the deterministic LoS component with ${{\bar {\bf{h}} }_{mkn}}\in {\mathbb{C}^{L}}$ being the channel between the $n$-th antenna of UE $k$ and AP $m$. ${\varphi _{mk}}\! \sim \!\mathcal{U}[ { - \pi ,\pi } ]$ is the uniformly distributed phase-shift of the LoS component. Note that we assume that the phase-shifts from all the antennas of UE $k$ to AP $m$ are identical, and ${\varphi _{mk}}$ in different coherence blocks are i.i.d. \cite{performance,uplink_performance}. ${\tilde {\bf{H}}_{mk}}$ denotes the stochastic NLoS component and ${{\bf{H}}_{mk,\text{iid}}} \!\in\! {\mathbb{C}^{L \times N}}$ is composed of i.i.d. $\mathcal{N}_{\mathbb{C}}( {0,1} )$ entries. The unitary matrices ${{\bf{U}}_{mk,\text{r}}} \!\in\! {\mathbb{C}^{L \times L}}$ and ${{\bf{U}}_{mk,\text{t}}} \in {\mathbb{C}^{N \times N}}$ are the eigenbasis of the one-sided correlation matrices ${{\bf{R}}_{mk,\text{r}}} \buildrel \Delta \over = \!\mathbb{E}\{ {{{\tilde {\bf{H}}}_{mk}}\tilde {\bf{H}}_{mk}^H} \}$ and ${{\bf{R}}_{mk,\text{t}}} \buildrel \Delta \over=\! \mathbb{E}\{ {\tilde {\bf{H}}_{mk}^T{{\tilde {\bf{H}}}_{mk}}^*} \}$, respectively \cite{Weichselberger}. Moreover, the eigenmode coupling matrix is defined as ${{\bf{W}}_{mk}} \!\buildrel \Delta \over =\! {\tilde {\bf{W}}_{mk}} \odot {\tilde {\bf{W}}_{mk}}$, capturing the spatial arrangement of scattering objects, whose the $(i,j)$-th element $[{\bf{W}}_{mk}]_{ij}$ specifies the average amount of power coupling from the $i$-th column of ${\bf U}_{mk,\text{r}}$ to the $j$-th column of ${\bf U}_{mk,\text{t}}$.

Let ${{\bf{h}}_{mk}}\! =\! {\mathrm{vec}} ( {{{\bf{H}}_{mk}}})\!\in\! {\mathbb C}^{LN }$ and the full correlation matrix
${{\bf{R}}_{mk}}\! \buildrel \Delta \over=\! \mathbb{E}\{ {{\mathrm{vec}} ( {{{\tilde {\bf{H}}}_{mk}}} ){\mathrm{vec}} {{( {{{\tilde {\bf{H}}}_{mk}}} )}^H}}\}\!\in\! {\mathbb C}^{LN \times LN}$ is given by ${{\bf{R}}_{mk}} \!=\! ( {{{\bf{U}}_{mk,\text{t}}^*}\! \otimes\! {{\bf{U}}_{mk,\text{r}}}} ){\mathrm{diag}} ( {{\mathrm{vec}} ( {{{\bf{W}}_{mk}}} )} ){( {{{\bf{U}}_{mk,\text{t}}^*}\! \otimes\! {{\bf{U}}_{mk,\text{r}}}})^H}$. Moreover, we can rewrite ${{\bf{R}}_{mk}}$ into the concatenation of ${N^2}$ blocks with the $(i,j)$-th submatrix being ${\bf{R}}_{mk}^{ij} = \mathbb{E}\{ {\bf{h}}_{mki}{\bf{h}}_{mkj}^H\}-{{\bar {\bf{h}} }_{mki}}{{\bar {\bf{h}} }_{mkj}^H}$ with ${\bf{h}}_{mki}$ and ${\bf{h}}_{mkj}$ being the $i$-th column and $j$-th column of ${{\bf{H}}_{mk}}$, respectively.

\begin{rem}
The motivations for adopting the Weichselberger Rician channel model in this paper are: 1) The Weichselberger Rician fading channel model with random phase-shifts in the LoS component is more practical than the commonly investigated Rayleigh fading channel model; 2) The Weichselberger model is realistic and suitable for practical scenarios with multi-antenna UEs. In particular, the Weichselberger model can capture the joint-correlation dynamics between the AP and UE-side \cite{Zhe_TWC}. More importantly, the eigenmode coupling matrix in the model can characterize the spatial arrangement of the scattering objects between each AP-UE pair in practical environment. In other words, the Weichselberger model embraces most channels of great interest. Moreover, the measurement campaigns in \cite{Weichselberger} validated that the Weichselberger model shows significantly less modeling error than other popular stochastic channel models.
\end{rem}

\subsection{Uplink Transmission}
\subsubsection{Channel Estimation}

We adopt mutually orthogonal pilot matrices for channel estimation and each pilot matrix is composed of $N$ mutually orthogonal pilot sequences \cite{multiUE3}. Let ${{\bf{P}}_k} \in {\mathbb{C}^{N\times {\tau _p}}}$ denote the pilot matrix of UE $k$ such that $\mathbf{P}_k\mathbf{P}_{k^{\prime}}^{H}=\tau _p\mathbf{I}_N$, if $\ k^{\prime}=k$ and ${\bf{0}}$ otherwise. We investigate a practical case where more than one UE exploits the same pilot matrix due to the limited system resources \cite{Making}. Let ${{\mathcal{P}}_k}$ denote the index subset of UEs that adopt the same pilot matrix as UE $k$ including itself. All UEs send their pilot signals and the received signal ${{\bf{Y}}_m^p}\in {\mathbb{C}^{L \times {\tau _p}}}$ at AP $m$ is
\begin{equation}
{{\bf{Y}}_{m}^{p}} = \sum\nolimits_{k = 1}^K {{{\bf{H}}_{mk}}} {{\bf{F}}_k}{{\bf{P}}_k} + {{\bf{N}}_m}, %%%%%%%%%%%
\end{equation}
where ${{\bf{F}}_k}\in {\mathbb{C}^{N \times N}}$ is the uplink precoding matrix of UE $k$ that satisfies $\mathbb{E}\{ {\| {{{\bf{F}}_k}} \|_{\mathrm{F}}^2} \} \leq {p_k}$ with ${p_k}$ being the maximum transmit power of UE $k$ and ${{\bf{N}}_m} \in {\mathbb{C}^{L \times {\tau _p}}}$ is the noise matrix with $\mathcal{N}_{\mathbb{C}}( {0,{\sigma ^2}} )$ elements.

To obtain the estimate of ${{\bf{h}}_{mk}}$, AP $m$ correlates ${\bf{Y}}_{m}^p$ with ${\bf{P}}_k^H$ as
\begin{equation}
{{\bf{Y}}}_{mk}^{p} = {{\bf{Y}}_m^p}{{\bf{P}}_k^H}= \sum\nolimits_{l \in {{\mathcal{P}}_k}} {{\tau _p}{{\bf{H}}_{ml}}{{\bf{F}}_l}}  + {{\bf{N}}_m}{\bf{P}}_k^H. %%%%%%%%%%%%%%%%%%
\end{equation}
We assume that ${\varphi _{mk}}$ is available at each AP and derive the phase-aware MMSE estimate of ${{\bf{h}}_{mk}}$ as
\begin{equation}
{\hat {\bf{h}}}_{mk}\!=\!{\mathrm{vec}}( {{\hat {\bf{H}}}_{mk}} )\!=\!{{\bar{\bf{h}}}_{mk}} {e^{j{\varphi _{mk}}}} \!+\! {{\bf{R}}_{mk}}{\tilde {\bf{F}}_k}{\bf{\Psi}} _{mk}^{ - 1}{\mathrm{vec}}( {{\bf{Y}}_{mk}^p} ), %%%%%%
\end{equation}
where ${{\bar{\bf{h}}}_{mk}}={\mathrm{vec}}( {{{\bar {\bf{H}} }_{mk}}} )$, ${\tilde {\bf{F}}_k} = {\bf{F}}_k^T \otimes {{\bf{I}}_L}$, and
${{\bf{\Psi}} _{mk}} = \sum\nolimits_{l \in {{\mathcal{P}}_k}} {{\tau _p}{{\tilde{\bf{F}}}_l}{{\bf{R}}_{ml}}\tilde {\bf{F}}_l^H} \! + {\sigma ^2}{{\bf{I}}_{LN}}$, respectively. The estimate ${\hat {\bf{h}}_{mk}}$ and estimation error ${{\bf{e}}_{mk}}\!=\!{{\bf{h}}_{mk}} - {\hat {\bf{h}}_{mk}}$ are independent random vectors with
\begin{equation}\notag
\begin{split}
{\mathbb{E}}\{ { {\hat {\bf{h}}_{mk}} \left| {{e^{j{\varphi _{mk}}}}} \right. } \} = {\bar {\bf{h}}_{mk}}{e^{j{\varphi _{mk}}}},&{\mathrm{Cov}}\{ {{\hat {\bf{h}}_{mk}} \left| {{e^{j{\varphi _{mk}}}}} \right.} \} ={{\bf{{\bf{\Phi}} }} _{mk}},\\
{\mathbb{E}}\{ { \bf{e}}_{mk}  \} = {\bf 0},\quad \quad \ \ \ \ \ \ \ \ \ \ \ \ \ \ &{\mathrm{Cov}}\{ {{ {\bf{e}}_{mk}}} \} ={{\bf C} _{mk}},
\end{split}
\end{equation}
where ${{\bf{{\bf{\Phi}} }} _{mk}} = {\tau _p}{{\bf{R}}_{mk}}{\tilde {\bf{F}}_k}{\bf{\Psi }} _{mk}^{ - 1}\tilde {\bf{F}}_k^H{{\bf{R}}_{mk}}$ and ${{\bf{C }} _{mk}} = {{\bf{R}}_{mk}}-{{\bf{{\bf{\Phi}} }} _{mk}}$.

\subsubsection{Data Transmission}

During the phase of data transmission, all antennas of all the UEs transmit their data symbols to all APs. The received signal of AP $m$ is
\begin{equation}
{{\bf{y}}_m} = \sum\nolimits_{k = 1}^K {{{\bf{H}}_{mk}}} {{\bf{s}}_k} + {{\bf{n}}_m},
\end{equation}
where ${{\bf{s}}_k} = {{\bf{F}}_k}{{\bf{x}}_k}\in {\mathbb{C}^{N}}$ is the transmitted signal of UE $k$ with ${{\bf{x}}_k} \sim \mathcal{N}_{\mathbb{C}}( {{\bf 0},{{\bf{I}}_N}} )$ being the data symbol vector of UE $k$ and ${{\bf{n}}_m} \sim \mathcal{N}_{\mathbb{C}}( {{\bf 0},{\sigma ^2}{{\bf{I}}_L}} )$ is the additive noise vector. We assume that ${{\bf{F}}_k}$ is designed based on only the channel statistics so is available at all APs and the CPU. Later, we will propose an effective precoding method to improve the system performance.

\section{Spectral Efficiency Analysis}

Depending on the required performance and complexity tradeoff in principle, four signal processing schemes can be implemented in CF mMIMO \cite{Making}. In particular, the ``fully centralized processing'' and the ``LSFD processing'' are regarded as competitive schemes for utilizing the potential of CF mMIMO. In this section, we investigate these two processing schemes and analyze their corresponding SE performance.

\subsection{Fully Centralized Processing}

Under the setting of the ``fully centralized processing'', APs forward all the received pilot signals and data signals to the CPU. Indeed, all processing is implemented at the CPU. The collective channel of UE $k$ is ${{\bf{h}}_k}\! =\! {[ { {{{\bf{h}}_{1k}^T}}, \cdots ,{{{\bf{h}}_{Mk}^T}}}  ]^T}\in {{\mathbb{C}}^{MLN}}$ with the mean ${\bar {\bf{h}}_k} \!=\! [ {{\bar {\bf{h}}_{1k}^T} {e^{j{\varphi _{1k}}}}, \cdots ,{\bar {\bf{h}}_{Mk}^T}{e^{j{\varphi _{Mk}}}}} ]^T$ and the full convariance matrix ${\bf{R }}_k \!=\! {\mathrm{diag}} ( {{\bf{R }}_{1k}, \cdots ,{\bf{R}}_{Mk}} )$. For the fully centralized processing, we assume that ${\varphi _{mk}}$ is available at the CPU so channel estimates can be derived by ``phase-aware'' MMSE estimators at the CPU. For UE $k$, the collective channel estimate can be constructed as ${\hat {\bf{h}}_k} \!=\!{[ { {{{\hat {\bf{h}}}_{1k}^T}}, \cdots ,{{{\hat{\bf{h}}}_{Mk}^T}}}  ]^T}$ with ${\mathbb{E}}\{ { {\hat {\bf{h}}_{k}} | {{e^{j{\varphi _{k}}}}}  } \}\! =\! {\bar {\bf{h}}_{k}}$, ${\mathrm{Cov}}\{ {{\hat {\bf{h}}_{k}} | {{e^{j{\varphi _{k}}}}} } \} \!=\!{\tau _p}{{\bf{R}}_{k}}{\bar {\bf{F}}_k}{\bf{\Psi}} _{k}^{ - 1}\bar {\bf{F}}_k^H{{\bf{R}}_{k}}$, where ${\bf{\Psi }}_k^{ - 1} \!=\! {\mathrm{diag}} ( {{\bf{\Psi }}_{1k}^{ - 1}, \cdots ,{\bf{\Psi }}_{Mk}^{ - 1}})$ and ${\bar {\bf{F}} _k} \!=\! {\mathrm{diag}} (\, {\underset{M}{\underbrace{{{{\tilde {\bf{F}}}_k}, \cdots ,{{\tilde {\bf{F}}}_k}}}}} \,)$.

In addition, the received signal available at the CPU is
\begin{equation}
\underset{=\mathbf{y}}{\underbrace{\left[ {\begin{array}{*{20}{c}}
{{{\bf{y}}_1}}\\
 \vdots \\
{{{\bf{y}}_M}}
\end{array}} \right]}} = \sum\limits_{k = 1}^K {\underset{={\mathbf{H}}_k}{\underbrace{\left[ {\begin{array}{*{20}{c}}
{{{\bf{H}}_{1k}}}\\
 \vdots \\
{{{\bf{H}}_{Mk}}}
\end{array}} \right]}}{{\bf{F}}_k}{{\bf{x}}_k}}  + \underset{=\mathbf{n}}{\underbrace{\left[ {\begin{array}{*{20}{c}}
{{{\bf{n}}_1}}\\
 \vdots \\
{{{\bf{n}}_M}}
\end{array}} \right]}}.
\end{equation}
Besides, the received signal can be expressed as ${\bf{y}} = \sum\nolimits_{k = 1}^K {{{\bf{H}}_k}{{\bf{F}}_k}{{\bf{x}}_k}}  + {\bf{n}}$. Furthermore, an arbitrary receive combining matrix ${\bf{V}}_k\in \mathbb{C}^{LM \times N}$ can be designed by the CPU based on the collective channel estimates to decode ${\bf x}_k$ as
\begin{equation}
{\hat {\bf{x}}_k} = {\bf{V}}_k^H{ {\bf{H}}_k}{{\bf{F}}_k}{{\bf{x}}_k} + \sum\limits_{l \ne k}^K {{\bf{V}}_k^H{{ {\bf{H}}}_l}{{\bf{F}}_l}{{\bf{x}}_l}}  + {\bf{V}}_k^H{\bf{n}}.
\end{equation}
Based on ${\hat {\bf{x}}_k}$, we can derive the achievable SE for UE $k$ by using standard capacity lower bounds \cite{Making} as summarized in the following corollary.

\begin{coro}\label{coro 1}
For the fully centralized processing with a given ${\bf{V}}_k$, the achievable SE for UE $k$ with the phase-aware MMSE estimator is given by
\begin{equation}\label{L4 SE 1}
{{\mathrm {SE}}_k^{(1)}} = \left( {1 - \frac{\tau _p}{\tau _c}} \right){\mathbb{E}} \left\{ {\log _2}\left| {{{\bf{I}}_N} +
{\bf{D}}_{k,(1)}^H{\bf{\Sigma }}_{k,(1)}^{ - 1}{{\bf{D}}_{k,(1)}}
} \right| \right\},
\end{equation}
where ${{\bf{D}}_{k,(1)}}\!\! \buildrel \Delta \over =\! {\bf{V}}_k^H{\hat {\bf{H}}_k}{{\bf{F}}_k}$ and ${{\bf{\Sigma }}_{k,(1)}} \!\buildrel \Delta \over =\! {\bf{V}}_k^H( {\sum\nolimits_{l = 1}^K {{{\hat {\bf{H}}}_l}{{\hat{\bf{F}}}_l}{\hat {\bf{H}}}_l^H}\!+ \!\sum\nolimits_{l=1}^K{{\bf{C}}_l^{\prime}}\!+\!{\sigma^2}{{\bf{I}}_{ML}}}){{\bf{V}}_k}\!-\!{{\bf{D}}_{k,{(1)}}}{\bf{D}}_{k,{(1)}}^H$ with ${{\hat{\bf F}}_l}={{\bf{F}}_l}{\bf{F}}_l^H$ and ${{\bf{C}}_l^{\prime}} \!=\! {\mathrm{diag}}( {{\bf{C}}_{1l}^\prime , \cdots ,{\bf{C}}_{Ml}^\prime })$ with the $(i,j)\text{-th}$ element of ${\bf{C}}_{ml}^\prime$ being ${[ {{\bf{C}}_{ml}^\prime } ]_{ij}} = \sum\nolimits_a^N {\sum\nolimits_b^N {{{[ {\hat{\bf F}}_l ]}_{ba}}} } {[ {{\bf{C}}_{ml}^{ba}} ]_{ij}}$.
\end{coro}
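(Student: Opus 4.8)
The plan is to derive the SE expression via the standard \emph{use-and-then-forget} (UatF) bounding technique combined with a worst-case Gaussian noise argument, exactly as in \cite{Making}. First I would start from the decoded symbol $\hat{\bf x}_k={\bf V}_k^H{\bf H}_k{\bf F}_k{\bf x}_k+\sum_{l\neq k}{\bf V}_k^H{\bf H}_l{\bf F}_l{\bf x}_l+{\bf V}_k^H{\bf n}$ and substitute the orthogonal decomposition ${\bf H}_l=\hat{\bf H}_l+{\bf E}_l$ implied by the MMSE estimator, where ${\rm vec}({\bf E}_l)={\bf e}_l$ has covariance ${\bf C}_l={\rm diag}({\bf C}_{1l},\dots,{\bf C}_{Ml})$ and is independent of $\hat{\bf H}_l$. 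The conditional mean of the useful term given $\hat{\bf H}_k$ is ${\bf D}_{k,(1)}{\bf x}_k={\bf V}_k^H\hat{\bf H}_k{\bf F}_k{\bf x}_k$; I would add and subtract this to isolate the ``desired signal'' from the ``self-interference due to channel uncertainty,'' the ``inter-user interference,'' and the noise, then collect all terms other than ${\bf D}_{k,(1)}{\bf x}_k$ into an effective noise vector that is uncorrelated with ${\bf x}_k$.

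Next I would invoke the fact that for a channel of the form $\hat{\bf x}_k={\bf D}_{k,(1)}{\bf x}_k+\tilde{\bf n}_k$ with ${\bf x}_k\sim\mathcal N_{\mathbb C}({\bf 0},{\bf I}_N)$, deterministic (conditioned) ${\bf D}_{k,(1)}$, and effective noise $\tilde{\bf n}_k$ uncorrelated with ${\bf x}_k$ and having covariance ${\bf \Sigma}_{k,(1)}$, the worst-case noise is Gaussian, giving the achievable rate ${\mathbb E}\{\log_2|{\bf I}_N+{\bf D}_{k,(1)}^H{\bf \Sigma}_{k,(1)}^{-1}{\bf D}_{k,(1)}|\}$; the prefactor $(1-\tau_p/\tau_c)$ accounts for the pilot overhead. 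The remaining work is to verify that the conditional covariance of the effective noise is exactly the claimed ${\bf \Sigma}_{k,(1)}$. Here I would compute ${\mathbb E}\{{\bf V}_k^H{\bf H}_l{\bf F}_l{\bf F}_l^H{\bf H}_l^H{\bf V}_k\mid\{\hat{\bf H}_j\}\}$ term by term: the estimate part contributes ${\bf V}_k^H\hat{\bf H}_l\hat{\bf F}_l\hat{\bf H}_l^H{\bf V}_k$ with $\hat{\bf F}_l={\bf F}_l{\bf F}_l^H$, the noise contributes $\sigma^2{\bf V}_k^H{\bf V}_k$, and the error part contributes ${\bf V}_k^H{\bf C}_l'{\bf V}_k$; summing over all $l$ and subtracting the desired-signal outer product ${\bf D}_{k,(1)}{\bf D}_{k,(1)}^H$ yields ${\bf \Sigma}_{k,(1)}$.

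The main obstacle — and the only step requiring real computation — is evaluating the error-induced interference covariance ${\mathbb E}\{{\bf E}_l{\bf F}_l{\bf F}_l^H{\bf E}_l^H\}$ and showing it equals ${\bf C}_l'={\rm diag}({\bf C}_{1l}',\dots,{\bf C}_{Ml}')$ with $[{\bf C}_{ml}']_{ij}=\sum_{a,b}[\hat{\bf F}_l]_{ba}[{\bf C}_{ml}^{ba}]_{ij}$. The subtlety is that ${\bf E}_l$ is an $ML\times N$ matrix with correlated columns (the column-pair covariances are the submatrices ${\bf C}_{ml}^{ij}$ defined earlier), so the quadratic form ${\bf E}_l{\bf F}_l{\bf F}_l^H{\bf E}_l^H=\sum_{a,b}[{\bf F}_l{\bf F}_l^H]_{ab}{\bf e}_{l,\text{col }a}{\bf e}_{l,\text{col }b}^H$ must be expanded columnwise; taking expectations and using ${\mathbb E}\{{\bf e}_{ml,a}{\bf e}_{m'l,b}^H\}={\bf 0}$ for $m\neq m'$ (blocks are independent across APs) gives the block-diagonal structure, and matching indices (noting $[\hat{\bf F}_l]_{ba}=[{\bf F}_l{\bf F}_l^H]_{ba}$) produces the stated entrywise formula. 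The independence of ${\bf x}_l$ across $l$, of ${\bf e}_l$ and $\hat{\bf h}_l$, and of the noise then makes all cross terms vanish, completing the identification of ${\bf \Sigma}_{k,(1)}$ and hence the corollary.
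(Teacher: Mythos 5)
Your derivation is correct and is essentially the standard argument behind the result the paper simply cites from \cite{Zhe_TWC}: decompose $\mathbf{H}_l=\hat{\mathbf{H}}_l+\mathbf{E}_l$, condition on the estimates, invoke the worst-case uncorrelated-additive-noise lemma with pre-log factor $(1-\tau_p/\tau_c)$, and verify that the error-induced covariance collapses to the block-diagonal $\mathbf{C}_l'$ through the columnwise expansion $\mathbf{E}_l\hat{\mathbf{F}}_l\mathbf{E}_l^H=\sum_{a,b}[\hat{\mathbf{F}}_l]_{ab}\,\mathbf{e}_{l,a}\mathbf{e}_{l,b}^H$ together with the independence of errors across APs (and note that the $l=k$ term $\mathbf{V}_k^H\hat{\mathbf{H}}_k\hat{\mathbf{F}}_k\hat{\mathbf{H}}_k^H\mathbf{V}_k$ is exactly cancelled by the subtracted $\mathbf{D}_{k,(1)}\mathbf{D}_{k,(1)}^H$, consistent with the sum over all $l$ in $\boldsymbol{\Sigma}_{k,(1)}$). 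One terminological caveat: what you describe is \emph{not} the use-and-then-forget bound but the bound that conditions on the instantaneous estimates (hence the expectation outside the log in \eqref{L4 SE 1} and the appearance of $\hat{\mathbf{H}}_l$ inside $\boldsymbol{\Sigma}_{k,(1)}$); the paper reserves UatF for the LSFD scheme where the CPU knows only statistics, and since your actual steps are the conditional bound, only the label is off.
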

\begin{proof}
The proof follows from \cite[Corollary 1]{Zhe_TWC}, thus is omitted for brevity.
\end{proof}

In this paper, we consider two promising combining schemes: the MR combining ${{\bf{V}}_{k}} = {\hat {\bf{H}}_{k}}$ and the MMSE combining given by
\begin{equation}\label{L4 MMSE combing}
{{\bf{V}}_{k}} \!=\! {\left( {\sum\limits_{l = 1}^K \left({{\hat{\bf{H}}}_l {{\hat{\bf F}}_l}{\hat{\bf{H}}}_l^H \!+\! {\bf{C}}_l^{\prime}} \right) \!+\! {\sigma ^2}{{\bf{I}}_{ML}}} \right)^{ - 1}}{{\hat{\bf{H}}}_{k}}{{\bf{F}}_k},
\end{equation}
which can minimize ${{\mathrm {MSE}}_{k}} = {\mathbb E}\{ {{{\| {{{\bf{x}}_k} - {\bf{V}}_{k}^H{\bf{y}}} \|}^2}| {{{{\hat {\bf{H}}}_{k}}} } } \}$ \cite{Zhe_TWC}. Note that the MMSE combining can also maximize \eqref{L4 SE 1} as the following corollary.

\begin{coro}\label{coro 2}
The MMSE combining matrix in \eqref{L4 MMSE combing} can maximize \eqref{L4 SE 1} with the maximum value as ${{\mathrm {SE}}_k^{(1)}} = ( {1 - \frac{{{\tau _p}}}{{{\tau _c}}}})\mathbb E \{ {\log _2}| {{{\bf{I}}_N} +(\mathbf{D}_{k,( 1 )}^{\prime})^H(\mathbf{\Sigma}_{k,(1)}^{\prime})^{-1} \mathbf{D}_{k,(1)}^{\prime}} |\}$ with $\mathbf{D}_{k,( 1)}^{\prime}\buildrel \Delta \over =\mathbf{\hat{H}}_k\mathbf{F}_k$ and $\mathbf{\Sigma }_{k,(1)}^{\prime}\buildrel \Delta \over =\sum_{l\ne k}^K{\mathbf{\hat{H}}_l\mathbf{\hat{F}}_l\mathbf{\hat{H}}_{l}^{H}} +\sum_{l=1}^K {\!\mathbf{C}_{l}^{\prime}}+\sigma ^2\mathbf{I}_{ML}$.
\end{coro}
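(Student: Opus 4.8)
The plan is to recognize Corollary~\ref{coro 2} as an instance of the classical optimality of the MMSE receiver for a $\log\det$ rate functional, and to reduce \eqref{L4 SE 1} to a Loewner-order inequality that is saturated precisely by the combiner in \eqref{L4 MMSE combing}. First I would fix a channel realization (so that all the $\hat{\mathbf{H}}_l$ are deterministic) and abbreviate $\mathbf{A}\triangleq\hat{\mathbf{H}}_k\mathbf{F}_k=\mathbf{D}_{k,(1)}^{\prime}$, $\mathbf{B}\triangleq\sum_{l=1}^K\hat{\mathbf{H}}_l\hat{\mathbf{F}}_l\hat{\mathbf{H}}_l^H+\sum_{l=1}^K\mathbf{C}_l^{\prime}+\sigma^2\mathbf{I}_{ML}$, and $\mathbf{C}\triangleq\mathbf{B}-\mathbf{A}\mathbf{A}^H$. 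Since $\hat{\mathbf{F}}_k=\mathbf{F}_k\mathbf{F}_k^H$ gives $\hat{\mathbf{H}}_k\hat{\mathbf{F}}_k\hat{\mathbf{H}}_k^H=\mathbf{A}\mathbf{A}^H$, one has $\mathbf{C}=\mathbf{\Sigma}_{k,(1)}^{\prime}$, and $\mathbf{C}\succ\mathbf{0}$ because it contains $\sigma^2\mathbf{I}_{ML}$ plus positive-semidefinite matrices. In this notation $\mathbf{D}_{k,(1)}=\mathbf{V}_k^H\mathbf{A}$ and $\mathbf{\Sigma}_{k,(1)}=\mathbf{V}_k^H\mathbf{B}\mathbf{V}_k-\mathbf{V}_k^H\mathbf{A}\mathbf{A}^H\mathbf{V}_k=\mathbf{V}_k^H\mathbf{C}\mathbf{V}_k$, so the argument of the determinant in \eqref{L4 SE 1} is $\mathbf{I}_N+\mathbf{A}^H\mathbf{V}_k(\mathbf{V}_k^H\mathbf{C}\mathbf{V}_k)^{-1}\mathbf{V}_k^H\mathbf{A}$.

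Next I would invoke the standard projection inequality. Writing $\tilde{\mathbf{V}}\triangleq\mathbf{C}^{1/2}\mathbf{V}_k$ and $\tilde{\mathbf{A}}\triangleq\mathbf{C}^{-1/2}\mathbf{A}$, we get $\mathbf{A}^H\mathbf{V}_k(\mathbf{V}_k^H\mathbf{C}\mathbf{V}_k)^{-1}\mathbf{V}_k^H\mathbf{A}=\tilde{\mathbf{A}}^H\mathbf{P}_{\tilde{\mathbf{V}}}\tilde{\mathbf{A}}\preceq\tilde{\mathbf{A}}^H\tilde{\mathbf{A}}=\mathbf{A}^H\mathbf{C}^{-1}\mathbf{A}$, where $\mathbf{P}_{\tilde{\mathbf{V}}}$ is the orthogonal projector onto the range of $\tilde{\mathbf{V}}$, with equality iff $\mathrm{range}(\mathbf{C}\mathbf{V}_k)\supseteq\mathrm{range}(\mathbf{A})$, i.e. iff $\mathbf{V}_k=\mathbf{C}^{-1}\mathbf{A}\mathbf{T}$ for some invertible $\mathbf{T}$. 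Because $\log_2|\mathbf{I}_N+\mathbf{X}|$ is nondecreasing in $\mathbf{X}$ with respect to the Loewner order, this yields, for every channel realization,
\[
\log_2\!\left|\mathbf{I}_N+\mathbf{D}_{k,(1)}^H\mathbf{\Sigma}_{k,(1)}^{-1}\mathbf{D}_{k,(1)}\right|\le\log_2\!\left|\mathbf{I}_N+\mathbf{A}^H\mathbf{C}^{-1}\mathbf{A}\right| ,
\]
and since the bound is pointwise in the realization, taking the expectation and multiplying by $1-\tau_p/\tau_c$ gives $\mathrm{SE}_k^{(1)}\le(1-\tfrac{\tau_p}{\tau_c})\mathbb{E}\{\log_2|\mathbf{I}_N+(\mathbf{D}_{k,(1)}^{\prime})^H(\mathbf{\Sigma}_{k,(1)}^{\prime})^{-1}\mathbf{D}_{k,(1)}^{\prime}|\}$, i.e. the claimed value is an upper bound.

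Finally I would show that \eqref{L4 MMSE combing}, namely $\mathbf{V}_k=\mathbf{B}^{-1}\hat{\mathbf{H}}_k\mathbf{F}_k=\mathbf{B}^{-1}\mathbf{A}$, attains equality. By the push-through (Woodbury) identity, $\mathbf{B}^{-1}\mathbf{A}=(\mathbf{C}+\mathbf{A}\mathbf{A}^H)^{-1}\mathbf{A}=\mathbf{C}^{-1}\mathbf{A}(\mathbf{I}_N+\mathbf{A}^H\mathbf{C}^{-1}\mathbf{A})^{-1}$, so $\mathbf{V}_k=\mathbf{C}^{-1}\mathbf{A}\mathbf{T}$ with the invertible $\mathbf{T}=(\mathbf{I}_N+\mathbf{A}^H\mathbf{C}^{-1}\mathbf{A})^{-1}$, which is exactly the equality condition; hence the upper bound is met with equality and is therefore the maximum of \eqref{L4 SE 1} over all $\mathbf{V}_k$. (As an alternative to the equality-condition argument, one may substitute $\mathbf{V}_k=\mathbf{B}^{-1}\mathbf{A}$ directly, obtain $\mathbf{D}_{k,(1)}=\mathbf{G}$ and $\mathbf{\Sigma}_{k,(1)}=\mathbf{G}-\mathbf{G}^2$ with $\mathbf{G}\triangleq\mathbf{A}^H\mathbf{B}^{-1}\mathbf{A}$, and simplify $\mathbf{I}_N+\mathbf{G}(\mathbf{G}-\mathbf{G}^2)^{-1}\mathbf{G}=(\mathbf{I}_N-\mathbf{G})^{-1}$, together with $\mathbf{I}_N-\mathbf{G}=(\mathbf{I}_N+\mathbf{A}^H\mathbf{C}^{-1}\mathbf{A})^{-1}$ from the same identity, to recover $\mathbf{I}_N+\mathbf{A}^H\mathbf{C}^{-1}\mathbf{A}$.)

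The main obstacle is not conceptual but bookkeeping: correctly identifying $\mathbf{\Sigma}_{k,(1)}^{\prime}$ with $\mathbf{C}=\mathbf{B}-\mathbf{A}\mathbf{A}^H$ via $\hat{\mathbf{F}}_k=\mathbf{F}_k\mathbf{F}_k^H$, and tracking the equality condition through the push-through identity carefully enough that the specific combiner in \eqref{L4 MMSE combing}—not merely some combiner of the form $\mathbf{C}^{-1}\mathbf{A}\mathbf{T}$—is seen to be optimal; one should also note explicitly that, since the determinant inequality holds for every channel realization, Loewner monotonicity transfers to the expectation with no extra argument. The argument parallels \cite{Zhe_TWC}, so the write-up can largely mirror that reference.
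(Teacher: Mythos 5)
Your proof is correct, and it is more self-contained than the paper's. The paper's Appendix~\ref{appendix B} simply quotes the ``optimal receiver'' result from \cite{Zhe_TWC}, asserting that the maximizer of \eqref{L4 SE 1} is $(\mathbf{A}-\mathbf{B}\mathbf{B}^H)^{-1}\mathbf{B}$ (in the paper's notation, which swaps the roles of $\mathbf{A}$ and $\mathbf{B}$ relative to yours: there $\mathbf{A}$ is the covariance and $\mathbf{B}=\hat{\mathbf{H}}_k\mathbf{F}_k$), and then observes that this differs from the MMSE combiner \eqref{L4 MMSE combing} only by right-multiplication with an invertible ``scaling matrix,'' which leaves the $\log\det$ value unchanged. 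You instead prove the optimality claim from scratch: the reduction $\mathbf{\Sigma}_{k,(1)}=\mathbf{V}_k^H\mathbf{C}\mathbf{V}_k$ with $\mathbf{C}=\mathbf{\Sigma}_{k,(1)}^{\prime}$, the whitened projection bound $\tilde{\mathbf{A}}^H\mathbf{P}_{\tilde{\mathbf{V}}}\tilde{\mathbf{A}}\preceq\tilde{\mathbf{A}}^H\tilde{\mathbf{A}}$, Loewner monotonicity of $\log_2|\mathbf{I}_N+\cdot|$ realization by realization, and the push-through identity to place $\mathbf{B}^{-1}\mathbf{A}$ in the equality class $\mathbf{C}^{-1}\mathbf{A}\mathbf{T}$ --- which is exactly the ``scaling matrix'' observation the paper makes, now justified rather than asserted. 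The only cosmetic caveat is your ``iff $\mathbf{V}_k=\mathbf{C}^{-1}\mathbf{A}\mathbf{T}$'' phrasing: the range-inclusion condition forces this form only when $\mathbf{V}_k$ has exactly $N$ columns and $\mathbf{A}$ has full column rank, but since only the ``if'' direction is needed to certify that \eqref{L4 MMSE combing} attains the bound, this does not affect the argument. Net effect: your route buys a complete, citation-free proof plus an explicit characterization of all maximizing combiners, at the cost of a few extra lines over the paper's two-sentence appeal to \cite{Zhe_TWC}.
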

\begin{proof}
Please refer to Appendix \ref{appendix B}.
\end{proof}

\subsection{Large-Scale Fading Decoding}

In this subsection, we investigate a two-layer processing scheme with the local processing at each AP as the first layer and the LSFD method at the CPU as the second layer \cite{Making,performance}. Let ${\bf{V}}_{mk} \in {\mathbb{C}}^{L\times N}$ denote the local combining matrix that AP $m$ selects for UE $k$. At AP $m$, the local estimate of ${{\bf{x}}_{k}}$ is
\begin{equation}
{\tilde {\bf{x}}_{mk}} \!=\! {\bf{V}}_{mk}^H{{\bf{H}}_{mk}}{{\bf{F}}_k}{{\bf{x}}_k} \!+\! \sum\limits_{l \ne k}^K {{\bf{V}}_{mk}^H{{\bf{H}}_{ml}}{{\bf{F}}_l}{{\bf{x}}_l}}  \!+\! {\bf{V}}_{mk}^H{{\bf{n}}_m}.
\end{equation}

Note that ${{\bf{V}}_{mk}}$ is designed at each AP by exploiting the local CSI. One possible scheme is the MR combining ${{\bf{V}}_{mk}} = {\hat {\bf{H}}_{mk}}$ and another handy choice is local MMSE (L-MMSE) combining ${{\bf{V}}_{mk}} = {( {\sum\nolimits_{l = 1}^K ({{\hat{\bf{H}}}_{ml} {{\hat{\bf F}}_l} {\hat{\bf{H}}}_{ml}^H + {\bf{C}}_{ml}^{\prime}} ) + {\sigma ^2}{{\bf{I}}_{L}}} )^{ - 1}}{{\hat{\bf{H}}}_{mk}}{{\bf{F}}_k}$.

The local estimates are then converged to the CPU, where they are linearly combined by the LSFD coefficient matrix ${{\bf{A}}_{mk}}\!\in\! {\mathbb C}^{N\times N}$ to obtain the final decode ${\mathbf{\bar{x}}_k}$ as ${\mathbf{\bar{x}}_k}  \!=\! \sum\nolimits_{m = 1}^M {{\bf{A}}_{mk}^H{{\tilde {\bf{x}}}_{mk}}}$. Let ${{\bf{A}}_k}\! =\!\! [{\bf{A}}_{1k}^H, \cdots, {\bf{A}}_{Mk}^H ]$, ${{\bf{G}}_{kl}}\!  =\!\! [ ({\bf{V}}_{1k}^H{{\bf{H}}_{1l}})^T, \cdots ,({\bf{V}}_{Mk}^H{{\bf{H}}_{Ml}})^T ]^T$ and ${{\bf{n}}_k^{\prime}} \! =\! [({\bf{V}}_{1k}^H{{\bf{n}}_1})^T, \cdots ,({\bf{V}}_{Mk}^H{{\bf{n}}_M})^T]^T$. The decoded signal can be written as
\begin{equation}
{\mathbf{\bar{x}}_k} = {{\bf{A}}_k}{{\bf{G}}_{kk}}{{\bf{F}}_k}{{\bf{x}}_k} + \sum\nolimits_{l \ne k}^K {{{\bf{A}}_k}{{\bf{G}}_{kl}}{{\bf{F}}_l}{{\bf{x}}_l}}  + {{\bf{A}}_k}{{\bf{n}}_k^{\prime}}.
\end{equation}
Note that since only the channel statistics are available at the CPU, the classical use-and-then-forget (UatF) bound \cite{Network} is applied to derive the following ergodic achievable SE.

\begin{coro}\label{coro 3}
The achievable SE for UE $k$ is given by
\begin{equation}\label{L3 SE 1}
{{\mathrm {SE}}_k^{\left(2 \right)}} = \left( {1 - \frac{{{\tau _p}}}{{{\tau _c}}}} \right){\log _2}\left| {{{\bf{I}}_N} + {\bf{D}}_{k,{\left(2 \right)}}^H{\bf{\Sigma }}_{k,{\left(2 \right)}}^{ - 1}{{\bf{D}}_{k,{\left(2 \right)}}}} \right|,
\end{equation}
where ${{\bf{D}}_{k,{(2 )}}} \buildrel \Delta \over = {{\bf{A}}_k}{\mathbb E}\{ {{{\bf{G}}_{kk}}} \}{{\bf{F}}_k}$ and $\!{{\bf{\Sigma }}_{k,{(2)}}} \!\!\buildrel \Delta \over  = \!\!\sum\nolimits_{l = 1}^K {{{\bf{A}}_k} {\mathbb E}\{ {{{\bf{G}}_{kl}}{{\hat{\bf F}}_l}{{\bf{G}}_{kl}^H}} \}{\bf{A}}_k^H} \! +\! {\sigma ^2}{{\bf{A}}_k}  {{{\bf{S}}_k}} {\bf{A}}_k^H\! -\! {{\bf{D}}_{k,{(2 )}}}{\bf{D}}_{k,{(2)}}^H$ with ${{{\bf{S}}_k}}\!\!=\! {\mathrm{diag}}( {{\mathbb E}\{ {{\bf{V}}_{1k}^H{{\bf{V}}_{1k}}} \}, \!\cdots\! ,{\mathbb E}\{ {{\bf{V}}_{Mk}^H{{\bf{V}}_{Mk}}} \}} ) $.
\end{coro}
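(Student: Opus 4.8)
The plan is to derive \eqref{L3 SE 1} by applying the standard use-and-then-forget (UatF) capacity bound to the decoded signal $\mathbf{\bar{x}}_k = \mathbf{A}_k\mathbf{G}_{kk}\mathbf{F}_k\mathbf{x}_k + \sum_{l\ne k}^K \mathbf{A}_k\mathbf{G}_{kl}\mathbf{F}_l\mathbf{x}_l + \mathbf{A}_k\mathbf{n}_k^{\prime}$. First I would rewrite the desired-signal term by adding and subtracting its conditional-on-statistics mean, splitting it as $\mathbf{A}_k\mathbb{E}\{\mathbf{G}_{kk}\}\mathbf{F}_k\mathbf{x}_k$ (the known effective channel seen by the decoder) plus a zero-mean residual $\mathbf{A}_k(\mathbf{G}_{kk}-\mathbb{E}\{\mathbf{G}_{kk}\})\mathbf{F}_k\mathbf{x}_k$ that is treated as uncorrelated noise. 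This immediately identifies $\mathbf{D}_{k,(2)} = \mathbf{A}_k\mathbb{E}\{\mathbf{G}_{kk}\}\mathbf{F}_k$ as the effective channel matrix.

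Next I would compute the covariance of the aggregate ``noise'' term, which consists of (i) the self-interference residual, (ii) the inter-user interference $\sum_{l\ne k}\mathbf{A}_k\mathbf{G}_{kl}\mathbf{F}_l\mathbf{x}_l$, and (iii) the thermal noise $\mathbf{A}_k\mathbf{n}_k^{\prime}$. Since $\mathbf{x}_l\sim\mathcal{N}_{\mathbb C}(\mathbf{0},\mathbf{I}_N)$ are mutually independent and independent of the channels, and since the data symbols of distinct users and the noise are all uncorrelated, the cross terms vanish and the three contributions add. Using $\mathbb{E}\{\mathbf{x}_l\mathbf{x}_l^H\}=\mathbf{I}_N$ and iterating expectations, contribution (ii) for $l\ne k$ becomes $\mathbf{A}_k\mathbb{E}\{\mathbf{G}_{kl}\mathbf{F}_l\mathbf{F}_l^H\mathbf{G}_{kl}^H\}\mathbf{A}_k^H = \mathbf{A}_k\mathbb{E}\{\mathbf{G}_{kl}\hat{\mathbf F}_l\mathbf{G}_{kl}^H\}\mathbf{A}_k^H$, while the $l=k$ piece from (i) plus the full $l=k$ second moment recombines: $\mathbb{E}\{\mathbf{G}_{kk}\hat{\mathbf F}_k\mathbf{G}_{kk}^H\} - \mathbb{E}\{\mathbf{G}_{kk}\}\mathbf{F}_k\mathbf{F}_k^H\mathbb{E}\{\mathbf{G}_{kk}\}^H$, so that summing over all $l$ and subtracting $\mathbf{D}_{k,(2)}\mathbf{D}_{k,(2)}^H$ exactly reproduces the stated $\mathbf{\Sigma}_{k,(2)}$. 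The noise term (iii) gives $\sigma^2\mathbf{A}_k\mathbb{E}\{\mathbf{n}_k^{\prime}(\mathbf{n}_k^{\prime})^H\}/\sigma^2 \cdot\mathbf{A}_k^H$; because $\mathbf{n}_k^{\prime}=[(\mathbf{V}_{1k}^H\mathbf{n}_1)^T,\dots,(\mathbf{V}_{Mk}^H\mathbf{n}_M)^T]^T$ with independent per-AP noise $\mathbf{n}_m\sim\mathcal{N}_{\mathbb C}(\mathbf{0},\sigma^2\mathbf{I}_L)$, the conditional covariance is block-diagonal with $m$-th block $\sigma^2\mathbf{V}_{mk}^H\mathbf{V}_{mk}$, and averaging over the combiners yields $\sigma^2\mathbf{A}_k\mathbf{S}_k\mathbf{A}_k^H$ with $\mathbf{S}_k=\mathrm{diag}(\mathbb{E}\{\mathbf{V}_{1k}^H\mathbf{V}_{1k}\},\dots,\mathbb{E}\{\mathbf{V}_{Mk}^H\mathbf{V}_{Mk}\})$.

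Finally, since $\mathbf{x}_k$ is Gaussian and the effective channel $\mathbf{D}_{k,(2)}$ is deterministic (a function of statistics only) while the residual-plus-interference-plus-noise has covariance $\mathbf{\Sigma}_{k,(2)}$, the worst-case (Gaussian) noise argument underpinning the UatF bound gives the achievable rate $\log_2|\mathbf{I}_N + \mathbf{D}_{k,(2)}^H\mathbf{\Sigma}_{k,(2)}^{-1}\mathbf{D}_{k,(2)}|$ per channel use, and multiplying by the pre-log factor $(1-\tau_p/\tau_c)$ accounting for the training overhead yields \eqref{L3 SE 1}. I expect the main obstacle to be the bookkeeping in the second-moment expansion—correctly tracking how the $l=k$ self-interference residual merges with the $l=k$ term of $\sum_l \mathbf{A}_k\mathbb{E}\{\mathbf{G}_{kl}\hat{\mathbf F}_l\mathbf{G}_{kl}^H\}\mathbf{A}_k^H$ after subtracting $\mathbf{D}_{k,(2)}\mathbf{D}_{k,(2)}^H$, and verifying that all cross-correlations between the three noise contributions indeed vanish given the independence of $\{\mathbf{x}_l\}$, $\{\mathbf{n}_m\}$ and the channel estimates—rather than any deep technical difficulty; the structure parallels \cite[Corollary 3]{Zhe_TWC} and the UatF derivation in \cite{Network}, so one could alternatively just cite those and omit the routine algebra.
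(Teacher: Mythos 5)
Your proposal is correct and follows exactly the route the paper takes (the paper simply omits the argument, pointing to Corollary~\ref{coro 1} and the standard UatF derivation in \cite{Network}, \cite{Zhe_TWC}): split off the deterministic effective channel ${\bf A}_k\mathbb{E}\{{\bf G}_{kk}\}{\bf F}_k$, verify that the self-interference residual, inter-user interference, and combined noise are mutually uncorrelated and uncorrelated with ${\bf x}_k$ so their covariances add to ${\bf{\Sigma}}_{k,(2)}$, and invoke the worst-case Gaussian noise lemma together with the pre-log factor $(1-\tau_p/\tau_c)$. The only blemish is the garbled intermediate expression $\sigma^2{\bf A}_k\mathbb{E}\{{\bf n}_k^{\prime}({\bf n}_k^{\prime})^H\}/\sigma^2\cdot{\bf A}_k^H$ for the noise term, but the stated conclusion $\sigma^2{\bf A}_k{\bf S}_k{\bf A}_k^H$ is the right one since the combiners depend only on the pilot-phase observations and are thus independent of the data-phase noise.
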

\begin{proof}
The proof is similar to Corollary \ref{coro 1}, thus is omitted.
\end{proof}

Note that ${\bf{A}}_k$ can be optimized by the CPU for maximizing the achievable SE in \eqref{L3 SE 1} as the following corollary.

\begin{coro}
The achievable SE in \eqref{L3 SE 1} can be maximized by
\begin{equation}\label{L3 Ak}
{{\bf{A}}_k} = {\left( {\sum\limits_{l = 1}^K {{\mathbb E}\left\{ {{{\bf{G}}_{kl}}{{\hat{\bf F}}_l}{{\bf{G}}_{kl}^H}} \right\}}  +  {\sigma ^2}{{{\bf{S}}_k}}} \right)^{ - 1}}{\mathbb E} \left\{ {{{\bf{G}}_{kk}}} \right\}{{\bf{F}}_k},
\end{equation}
which leads to the maximum value ${{\mathrm {SE}}_k^{(2 )}} \!=\! ( {1 - \frac{{{\tau _p}}}{{{\tau _c}}}} ){\log _2}| {
{{\bf{I}}_N} \!+\!
( \mathbf{D}_{k,( 2 )}^{\prime} ) ^H( \mathbf{\Sigma }_{k,( 2 )}^{\prime} ) ^{-1}\mathbf{D}_{k,( 2 )}^{\prime}} |$ with ${{\bf{D}}_{k,{(2 )}}^{\prime}} \buildrel \Delta \over = {\mathbb E}\{ {{{\bf{G}}_{kk}}} \}{{\bf{F}}_k}$ and ${{\bf{\Sigma }}_{k,{(2 )}}^{\prime}}\buildrel \Delta \over  = \sum\nolimits_{l = 1}^K{\mathbb E}\{ {{{\bf{G}}_{kl}}{{\hat{\bf F}}_l}{{\bf{G}}_{kl}^H}} \} + {\sigma ^2}  {{{\bf{S}}_k}}  - {{\bf{D}}_{k,{(2)}}^{\prime}}( \mathbf{D}_{k,( 2 )}^{\prime} ) ^H$.
\end{coro}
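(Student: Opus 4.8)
The plan is to treat the achievable SE in \eqref{L3 SE 1} as a function of $\mathbf{A}_k$ alone and optimize it by the same argument used for the single-antenna LSFD weight in \cite{Making,performance}. First I would collect the $\mathbf{A}_k$-dependence into compact notation: write $\mathbf{B}_k \buildrel \Delta \over = \mathbb{E}\{\mathbf{G}_{kk}\}\mathbf{F}_k$ and $\mathbf{Z}_k \buildrel \Delta \over = \sum_{l=1}^{K}\mathbb{E}\{\mathbf{G}_{kl}\hat{\mathbf{F}}_l\mathbf{G}_{kl}^H\} + \sigma^2\mathbf{S}_k$, so that $\mathbf{D}_{k,(2)} = \mathbf{A}_k\mathbf{B}_k$ and, after absorbing the subtracted $\mathbf{D}_{k,(2)}\mathbf{D}_{k,(2)}^H$, $\mathbf{\Sigma}_{k,(2)} = \mathbf{A}_k\bigl(\mathbf{Z}_k - \mathbf{B}_k\mathbf{B}_k^H\bigr)\mathbf{A}_k^H$. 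Denote $\mathbf{\Lambda}_k \buildrel \Delta \over = \mathbf{Z}_k - \mathbf{B}_k\mathbf{B}_k^H$; a short preliminary step records that $\mathbf{\Lambda}_k \succ \mathbf{0}$, because the $l=k$ summand of $\mathbf{Z}_k$ equals $\mathbb{E}\{(\mathbf{G}_{kk}\mathbf{F}_k)(\mathbf{G}_{kk}\mathbf{F}_k)^H\} \succeq \mathbb{E}\{\mathbf{G}_{kk}\mathbf{F}_k\}\mathbb{E}\{\mathbf{G}_{kk}\mathbf{F}_k\}^H = \mathbf{B}_k\mathbf{B}_k^H$ by Jensen's inequality, while the remaining summands are positive semidefinite and the noise term $\sigma^2\mathbf{S}_k$ is positive definite.

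The core step is the matrix generalized Cauchy--Schwarz (equivalently, minimum-variance) inequality: for $\mathbf{\Lambda}_k \succ \mathbf{0}$ and any $\mathbf{A}_k$ of full row rank,
\[
(\mathbf{A}_k\mathbf{B}_k)^H\bigl(\mathbf{A}_k\mathbf{\Lambda}_k\mathbf{A}_k^H\bigr)^{-1}(\mathbf{A}_k\mathbf{B}_k) \preceq \mathbf{B}_k^H\mathbf{\Lambda}_k^{-1}\mathbf{B}_k ,
\]
with equality if and only if $\mathbf{A}_k^H = \mathbf{\Lambda}_k^{-1}\mathbf{B}_k\mathbf{T}$ for some invertible $\mathbf{T}$; this is proved in one line by writing the difference of the two sides as $(\mathbf{\Lambda}_k^{-1/2}\mathbf{B}_k)^H(\mathbf{I}-\mathbf{P})(\mathbf{\Lambda}_k^{-1/2}\mathbf{B}_k)$ with $\mathbf{P}$ the orthogonal projector onto $\mathrm{range}(\mathbf{\Lambda}_k^{1/2}\mathbf{A}_k^H)$. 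Since $\mathbf{X}\mapsto\log_2|\mathbf{I}_N+\mathbf{X}|$ is increasing on the positive semidefinite cone, \eqref{L3 SE 1} is therefore bounded above by $(1-\tau_p/\tau_c)\log_2|\mathbf{I}_N + \mathbf{B}_k^H\mathbf{\Lambda}_k^{-1}\mathbf{B}_k|$, and the bound is attained exactly at the MMSE-type weights. It then remains to verify that the weight in \eqref{L3 Ak}, which is written with $\mathbf{Z}_k$ rather than $\mathbf{\Lambda}_k$, meets the equality condition: applying the Woodbury identity to $\mathbf{\Lambda}_k^{-1}=(\mathbf{Z}_k-\mathbf{B}_k\mathbf{B}_k^H)^{-1}$ gives $\mathbf{\Lambda}_k^{-1}\mathbf{B}_k = \mathbf{Z}_k^{-1}\mathbf{B}_k\,(\mathbf{I}_N-\mathbf{B}_k^H\mathbf{Z}_k^{-1}\mathbf{B}_k)^{-1}$, so \eqref{L3 Ak} coincides with $\mathbf{\Lambda}_k^{-1}\mathbf{B}_k$ up to the invertible right factor $\mathbf{T}=\mathbf{I}_N-\mathbf{B}_k^H\mathbf{Z}_k^{-1}\mathbf{B}_k$ (nonsingular because $\mathbf{Z}_k\succ\mathbf{B}_k\mathbf{B}_k^H$ forces $\mathbf{B}_k^H\mathbf{Z}_k^{-1}\mathbf{B}_k\prec\mathbf{I}_N$), which is precisely the freedom allowed above.

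Finally, substituting the optimal weight turns the SINR matrix into $\mathbf{B}_k^H\mathbf{\Lambda}_k^{-1}\mathbf{B}_k$; identifying $\mathbf{B}_k$ with $\mathbf{D}'_{k,(2)}$ and $\mathbf{\Lambda}_k=\mathbf{Z}_k-\mathbf{B}_k\mathbf{B}_k^H$ with $\mathbf{\Sigma}'_{k,(2)}$ gives exactly the stated maximum value. (Equivalently, one substitutes \eqref{L3 Ak} directly, observes that with $\mathbf{G}\buildrel\Delta\over=\mathbf{B}_k^H\mathbf{Z}_k^{-1}\mathbf{B}_k$ one gets $\mathbf{D}_{k,(2)}=\mathbf{G}$ and $\mathbf{\Sigma}_{k,(2)}=\mathbf{G}(\mathbf{I}_N-\mathbf{G})$, hence $\mathbf{I}_N+\mathbf{D}_{k,(2)}^H\mathbf{\Sigma}_{k,(2)}^{-1}\mathbf{D}_{k,(2)}=(\mathbf{I}_N-\mathbf{G})^{-1}$, and then the matrix determinant lemma converts $-\log_2|\mathbf{I}_N-\mathbf{G}|$ into $\log_2|\mathbf{I}_N+\mathbf{B}_k^H\mathbf{\Lambda}_k^{-1}\mathbf{B}_k|$.) The main obstacle is bookkeeping rather than a deep idea: one must keep the Hermitian-transpose/stacking convention of $\mathbf{A}_k$ straight, confirm the positive-definiteness of $\mathbf{\Lambda}_k$ (and hence invertibility of $\mathbf{\Sigma}_{k,(2)}$), and correctly match the $\mathbf{Z}_k$-based weight of \eqref{L3 Ak} to the $\mathbf{\Lambda}_k$-based MMSE weight through the Woodbury step; everything else is the same routine linear algebra as in the proof of the analogous result in \cite{Zhe_TWC}.
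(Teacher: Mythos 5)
Your proposal is correct and follows essentially the same route as the paper, which reduces this corollary to the argument of Corollary~2 (Appendix~A): the optimal weight is $(\mathbf{Z}_k-\mathbf{B}_k\mathbf{B}_k^H)^{-1}\mathbf{B}_k$ and the stated $\mathbf{Z}_k^{-1}\mathbf{B}_k$ form differs from it only by an invertible right factor that leaves the log-determinant unchanged. The only difference is that you supply the optimality step explicitly (via the projection/Cauchy--Schwarz inequality and the Woodbury identity) where the paper simply cites the optimal-receiver result of \cite{Zhe_TWC}; note also that your auxiliary symbol $\mathbf{Z}_k$ collides with the $\mathbf{Z}_k$ already defined in Theorem~\ref{thm_1}, so it should be renamed.
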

\begin{proof}
The proof is similar to Corollary \ref{coro 2}, thus is omitted.
\end{proof}
Furthermore, when the MR combining is applied at each AP, we can compute the expectations in \eqref{L3 SE 1} as closed-form and derive the closed-form SE expressions as the following theorem.

\begin{thm}\label{thm_1}
If the MR combining with ${{\bf{V}}_{mk}} = {\hat {\bf{H}}_{mk}}$ is adopted, we can obtain the closed-form SE expressions for UE $k$ as \footnote{Note that the phase-shifts in the LoS component are unaware at the CPU for the LSFD scheme so would have an impact on the derivation of SE. An illustration is the case with $l \in {{\mathcal P}_k}\backslash\{k\}$ as shown in Appendix~\ref{appendix C}.}
\begin{equation}\label{L3 SE 2}
{{\mathrm {SE}}_k^{\left(2 \right)}} = \left( {1 - \frac{{{\tau _p}}}{{{\tau _c}}}} \right){\log _2}\left| {
{{\bf{I}}_N} + {\bf{D}}_{k,{\left(2 \right)}}^H{\bf{\Sigma }}_{k,{\left(2 \right)}}^{ - 1}{{\bf{D}}_{k,{\left(2 \right)}}}  } \right|,
\end{equation}
\sloppy
where ${{\bf{D}}_{k,( 2 )}} \!\buildrel \Delta \over =\! {{\bf{A}}_k}{{\bf Z}_k}{{\bf{F}}_k}$ and ${{\bf{\Sigma }}_{k,( 2 )}} \!\buildrel \Delta \over =\! \sum\nolimits_{l = 1}^K {{{\bf{A}}_k}{{\bf{\Theta }}_{kl}}{\bf{A}}_k^H}  + {\sigma ^2}{{\bf{A}}_k}{{\bf{S}}_k}{\bf{A}}_k^H -{{\bf{D}}_{k,(2)}}{\bf{D}}_{k,(2)}^H$ with ${{\bf Z}_k}={[ {
{{\bf{Z}}_{1k}^T}, \cdots ,{{\bf{Z}}_{Mk}^T}} ]^T}$ and ${{\bf{S}}_k} \!=\! {\mathrm{diag}} ({{\bf{Z}}_{1k}}, \cdots ,{{\bf{Z}}_{Mk}})$ with $(i,j)$-th element of ${{{\bf{Z}}_{mk}}} \in {\mathbb{C}}^{N \times N}$ being ${[ {{{\bf{Z}}_{mk}}} ]_{ij}} = {\mathrm {tr}} ( {{\bf{{\bf{\Phi}} }} _{mk}^{ji} + {{\bar {\bf{h}} }_{mkj}}{\bar {\bf{h}}} _{mki}^H} )$. ${{{\bf{\Theta}} _{kl}}}\in \mathbb{C} ^{MN\times MN}$ can be structured into $M^2$ blocks with the $(m,n)$-submatrix ${\bf{\Theta }}_{kl}^{mn}\in \mathbb{C} ^{N\times N}$ being
\begin{equation}
{\bf{\Theta }}_{kl}^{mn} = \left\{ {\begin{array}{*{20}{c}}
{{\bf{\Gamma }}_{kl}^{mm} + {\bf{T}}_{kl}^m,m = n}\\
{{\bf{\Gamma }}_{kl}^{mn},\qquad\quad m \ne n}
\end{array}} \right.
\end{equation}
with the $(i,j)$-th element of ${\bf{\Gamma}}_{kl}^{mn}$ and ${{\bf{T}}_{kl}^{m}}$ being ${[ {{\bf{\Gamma}}_{kl}^{mn}} ]_{ij}} \!=\! \sum\nolimits_a^N {\sum\nolimits_b^N {{{[ {{\hat{\bf F}}_l} ]}_{ba}}} } \, \gamma $ and ${[ {{\bf{T}}_{kl}^{m}} ]_{ij}} \!=\! \sum\nolimits_a^N {\sum\nolimits_b^N {{{[ {{\hat{\bf F}}_l} ]}_{ba}}} } \, t $, respectively, where $\gamma$ and $t$ (we omit the subscript ``${kl,mn,ij,ab}$'' for brevity) are given in \eqref{L3 fig} at the top of the next page.

\sloppy
Besides, the LSFD coefficient matrix in \eqref{L3 Ak} can be written in closed-form as 
${{\bf{A}}_k} = {( {\sum\nolimits_{l = 1}^K {\bf{\Theta }}_{kl}   +  {\sigma ^2}{{{\bf{S}}_k}}} )^{ - 1}} {{\bf{Z}}_{mk}} {{\bf{F}}_k}$.
\end{thm}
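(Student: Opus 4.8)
The plan is to start from Corollary~\ref{coro 3} and evaluate, in closed form, the three statistical averages that appear in \eqref{L3 SE 1} once $\mathbf{V}_{mk}=\hat{\mathbf{H}}_{mk}$ is inserted: $\mathbb{E}\{\mathbf{G}_{kk}\}$, the per-AP blocks $\mathbb{E}\{\mathbf{V}_{mk}^H\mathbf{V}_{mk}\}=\mathbb{E}\{\hat{\mathbf{H}}_{mk}^H\hat{\mathbf{H}}_{mk}\}$ of $\mathbf{S}_k$, and the $M^2$ blocks of $\mathbf{\Theta}_{kl}=\mathbb{E}\{\mathbf{G}_{kl}\hat{\mathbf{F}}_l\mathbf{G}_{kl}^H\}$ for every $l$. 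Throughout I would use the split $\mathbf{H}_{mk}=\hat{\mathbf{H}}_{mk}+\mathbf{E}_{mk}$ with $\hat{\mathbf{H}}_{mk}\perp\mathbf{E}_{mk}$, the conditional first/second moments of the phase-aware MMSE estimate stated after (4), and the $N^2$-block partitions of $\mathbf{R}_{mk}$ and $\mathbf{\Phi}_{mk}$ into the submatrices $\mathbf{R}_{mk}^{ij},\mathbf{\Phi}_{mk}^{ij}$. The facts used repeatedly are that $\mathbb{E}\{e^{j\varphi_{mk}}\}=0$ and that $\varphi_{mk},\varphi_{ml}$ are independent whenever $(m,k)\neq(m,l)$, so any cross-term carrying an uncompensated phase averages to zero, while the pair $(\hat{\mathbf{H}}_{mk},\mathbf{H}_{ml})$ is independent of $(\hat{\mathbf{H}}_{nk},\mathbf{H}_{nl})$ for $m\neq n$.

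First I would compute $\mathbf{Z}_{mk}\triangleq\mathbb{E}\{\hat{\mathbf{H}}_{mk}^H\hat{\mathbf{H}}_{mk}\}$. Writing the $(i,j)$ entry as $\mathrm{tr}(\mathbb{E}\{\hat{\mathbf{h}}_{mkj}\hat{\mathbf{h}}_{mki}^H\})$ and using $\mathbb{E}\{\hat{\mathbf{h}}_{mk}\hat{\mathbf{h}}_{mk}^H\}=\mathbf{\Phi}_{mk}+\bar{\mathbf{h}}_{mk}\bar{\mathbf{h}}_{mk}^H$ (the phase cancels in the outer product, so conditioning on $\varphi_{mk}$ is immaterial), the $(j,i)$ sub-block of this matrix yields $[\mathbf{Z}_{mk}]_{ij}=\mathrm{tr}(\mathbf{\Phi}_{mk}^{ji}+\bar{\mathbf{h}}_{mkj}\bar{\mathbf{h}}_{mki}^H)$, which is exactly the claimed expression. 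This immediately gives $\mathbf{S}_k=\mathrm{diag}(\mathbf{Z}_{1k},\dots,\mathbf{Z}_{Mk})$ and, since $\mathbb{E}\{\hat{\mathbf{H}}_{mk}^H\mathbf{H}_{mk}\}=\mathbf{Z}_{mk}+\mathbb{E}\{\hat{\mathbf{H}}_{mk}\}^H\mathbb{E}\{\mathbf{E}_{mk}\}=\mathbf{Z}_{mk}$, also $\mathbb{E}\{\mathbf{G}_{kk}\}=\mathbf{Z}_k$.

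Next I would evaluate $\mathbf{\Theta}_{kl}^{mn}=\mathbb{E}\{\hat{\mathbf{H}}_{mk}^H\mathbf{H}_{ml}\hat{\mathbf{F}}_l\mathbf{H}_{nl}^H\hat{\mathbf{H}}_{nk}\}$ by cases. For $m\neq n$, independence across APs factors it as $\mathbb{E}\{\hat{\mathbf{H}}_{mk}^H\mathbf{H}_{ml}\}\hat{\mathbf{F}}_l(\mathbb{E}\{\hat{\mathbf{H}}_{nk}^H\mathbf{H}_{nl}\})^H$; expanding $\hat{\mathbf{h}}_{mk}$ through its MMSE form (4) and $\mathbf{h}_{ml}$ through its LoS/NLoS split, all surviving terms require $l\in\mathcal{P}_k$ (pilot sharing), so $\mathbb{E}\{\hat{\mathbf{H}}_{mk}^H\mathbf{H}_{ml}\}$ is a known matrix built from $\mathbf{R}_{mk},\tilde{\mathbf{F}}_k,\mathbf{\Psi}_{mk}^{-1},\mathbf{R}_{ml},\bar{\mathbf{h}}_{ml}$ (and $\bar{\mathbf{h}}_{mk}$ only when $l=k$), and the block equals $\mathbf{\Gamma}_{kl}^{mn}$; in particular it vanishes when $l\notin\mathcal{P}_k$. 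For $m=n$ I would substitute $\hat{\mathbf{F}}_l=\mathbf{F}_l\mathbf{F}_l^H$ and pass to columns so that $\mathbf{H}_{ml}\hat{\mathbf{F}}_l\mathbf{H}_{ml}^H=\sum_{a,b}[\hat{\mathbf{F}}_l]_{ba}\mathbf{h}_{mla}\mathbf{h}_{mlb}^H$ and the $(i,j)$ entry of the block becomes $\sum_{a,b}[\hat{\mathbf{F}}_l]_{ba}\,\mathbb{E}\{\hat{\mathbf{h}}_{mki}^H\mathbf{h}_{mla}\mathbf{h}_{mlb}^H\hat{\mathbf{h}}_{mkj}\}$; I would then write $\hat{\mathbf{h}}_{mki}$ and $\mathbf{h}_{mla}$ as conditional mean plus fluctuation and expand the resulting scalar fourth moment. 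When $l\notin\mathcal{P}_k$ the estimate and $\mathbf{H}_{ml}$ are independent, so only products of the second moments $\mathbb{E}\{\hat{\mathbf{h}}_{mki}\hat{\mathbf{h}}_{mkj}^H\}=\mathbf{\Phi}_{mk}^{ij}+\bar{\mathbf{h}}_{mki}\bar{\mathbf{h}}_{mkj}^H$ and $\mathbb{E}\{\mathbf{h}_{mla}\mathbf{h}_{mlb}^H\}=\mathbf{R}_{ml}^{ab}+\bar{\mathbf{h}}_{mla}\bar{\mathbf{h}}_{mlb}^H$ remain; when $l\in\mathcal{P}_k$ the Gaussian NLoS parts of $\hat{\mathbf{h}}_{mk}$ and $\tilde{\mathbf{h}}_{ml}$ are correlated, and I would invoke the Isserlis/Wick fourth-moment identity together with the Weichselberger structure to produce the additional traces. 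Collecting everything, the $(i,j)$ entries organize into $\sum_{a,b}[\hat{\mathbf{F}}_l]_{ba}\gamma$ for the part common to $m=n$ and $m\neq n$ and $\sum_{a,b}[\hat{\mathbf{F}}_l]_{ba}\,t$ for the purely diagonal remainder, i.e.\ $\mathbf{\Theta}_{kl}^{mm}=\mathbf{\Gamma}_{kl}^{mm}+\mathbf{T}_{kl}^m$ with $\gamma,t$ as displayed in \eqref{L3 fig}. Substituting $\mathbb{E}\{\mathbf{G}_{kk}\}=\mathbf{Z}_k$, $\mathbb{E}\{\mathbf{G}_{kl}\hat{\mathbf{F}}_l\mathbf{G}_{kl}^H\}=\mathbf{\Theta}_{kl}$ and $\mathbf{S}_k$ into $\mathbf{D}_{k,(2)},\mathbf{\Sigma}_{k,(2)}$ of Corollary~\ref{coro 3} and into the optimal LSFD weight \eqref{L3 Ak} yields \eqref{L3 SE 2} and the stated closed form of $\mathbf{A}_k$.

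The main obstacle is the $m=n$, $l\in\mathcal{P}_k$ term: one must track the masked Gaussian $\tilde{\mathbf{W}}_{mk}\odot\mathbf{H}_{mk,\mathrm{iid}}$ and the unitaries $\mathbf{U}_{mk,\mathrm{r}},\mathbf{U}_{mk,\mathrm{t}}$ through a fourth-order moment while simultaneously accounting for the pilot-contamination coupling between $\hat{\mathbf{H}}_{mk}$ and $\mathbf{H}_{ml}$ and for the independent phase-shifts $\varphi_{mk},\varphi_{ml}$ (as illustrated in Appendix~\ref{appendix C}), and then re-express the result purely in terms of the block quantities $\mathbf{R}_{mk}^{ij},\mathbf{\Phi}_{mk}^{ij},\mathbf{C}_{mk}^{ij},\bar{\mathbf{h}}_{mki}$ and the precoders. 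The bookkeeping over the UE-antenna index pairs $(a,b)$ and the $N^2$ sub-blocks is where essentially all the effort lies, whereas the remaining cases and the final substitution are routine.
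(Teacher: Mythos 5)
Your proposal is correct and follows essentially the same route as the paper's Appendix: compute $\mathbf{Z}_{mk}=\mathbb{E}\{\hat{\mathbf{H}}_{mk}^H\hat{\mathbf{H}}_{mk}\}$ entrywise, then evaluate the blocks of $\mathbf{\Theta}_{kl}$ by the case split $m=n$ versus $m\neq n$ (using independence across APs to factor the latter) and $l=k$, $l\in\mathcal{P}_k\setminus\{k\}$, $l\notin\mathcal{P}_k$, exploiting $\mathbb{E}\{e^{j\varphi_{mk}}\}=0$ and the independence of the phase-shifts to kill uncompensated LoS cross-terms, and Gaussian fourth-moment identities for the pilot-contaminated NLoS parts. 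The only cosmetic difference is that you invoke Isserlis/Wick directly whereas the paper carries out the same computation through the square-root representation $\hat{\mathbf{h}}_{mki}=\bar{\mathbf{h}}_{mki}e^{j\varphi_{mk}}+\sum_u\tilde{\mathbf{\Phi}}_{mk}^{iu}\mathbf{x}_{mk}^u$ with $\tilde{\mathbf{\Phi}}_{mk}=\mathbf{\Phi}_{mk}^{1/2}$.
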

\begin{proof}
Please refer to Appendix \ref{appendix C}.
\end{proof}

\begin{rem}
Note that all achievable SE expressions hold for any ${{\bf{R}}_{mk}}$. Besides, only SE expressions for the LSFD scheme with the MR combining can be computed in the closed-form.
\end{rem}

\newcounter{m2}
\begin{figure*}[t]
\normalsize
\setcounter{m2}{\value{equation}}
\vspace*{4pt}
\setcounter{equation}{15}
\begin{equation}\label{L3 fig}
\begin{split}
{\gamma} &=
\left\{ \begin{array}{l}
\sum\nolimits_{u_1}^N {\sum\nolimits_{u_2}^N { {{\mathrm {tr}} ( {\tilde {\bf{\Phi}} {{_{mk}^{u_1i}}}\tilde {\bf{\Phi}} _{ml}^{bu_1}} )  {\mathrm {tr}} ( {\tilde {\bf{\Phi}} {{_{nl}^{u_2a}}}\tilde {\bf{\Phi}} _{nk}^{ju_2}} )} } }, \qquad\qquad\qquad\qquad\qquad\qquad\qquad\qquad l \in {{\mathcal P}_k}\backslash \{ k\}, \\
\sum\nolimits_{u_1}^N {\sum\nolimits_{u_2}^N { {{\mathrm {tr}} ( {\tilde {\bf{\Phi}} {{_{mk}^{u_1i}}}\tilde {\bf{\Phi}} _{ml}^{bu_1}} )  {\mathrm {tr}} ( {\tilde {\bf{\Phi}} {{_{nl}^{u_2a}}}\tilde {\bf{\Phi}} _{nk}^{ju_2}} )} } }
\! +\! \sum\nolimits_{u_1}^N {{\mathrm {tr}} ( {\bar {\bf{h}} _{mki}^H{{\bar {\bf{h}} }_{mlb}}\tilde {\bf{\Phi}} {{_{nl}^{u_1a}}}\tilde {\bf{\Phi}} _{nk}^{ju_1}} ) }\\
\! +\! \sum\nolimits_{u_1}^N {{\mathrm {tr}} ( {\tilde {\bf{\Phi}} {{_{mk}^{u_1i}}}\tilde {\bf{\Phi}} _{ml}^{bu_1}\bar {\bf{h}} _{nla}^H{{\bar {\bf{h}} }_{nkj}}} )}
 \!+\! \bar {\bf{h}} _{mki}^H{\bar {\bf{h}} _{mlb}}\bar {\bf{h}} _{nla}^H{\bar {\bf{h}} _{nkj}},\qquad\qquad\qquad\qquad\quad\quad\quad l = k,\\%%
\end{array} \right.\\
 {t} &=\!
\begin{array}{l}
\!{\mathrm {tr}} ( {( {{{\bar {\bf{h}} }_{mkj}}\bar {\bf{h}} _{mki}^H + {\bf{\Phi}} _{mk}^{ji}} )\cdot {{\bf{C}}_{ml}^{ba} } })
\!+\! \sum\nolimits_{u_1}^N { {\bar {\bf{h}} _{mki}^H\tilde {\bf{\Phi}} _{ml}^{bu_1}\tilde {\bf{\Phi}} {{_{ml}^{u_1a}}}{{\bar {\bf{h}} }_{mkj}}} }
\!+\! \sum\nolimits_{u_1}^N {{\mathrm {tr}}( {\tilde {\bf{\Phi}} {{_{mk}^{u_1i}}}{{\bar {\bf{h}} }_{mlb}}\bar {\bf{h}} _{mla}^H\tilde {\bf{\Phi}} _{mk}^{ju_1}})}\!\!\\%%
\end{array}\\
&+ \left\{ \begin{array}{l}
\!\!{\mathrm {tr}}({\bf{\Phi}} _{mk}^{ji}{\bf{\Phi}} _{ml}^{ba})+\! \bar {\bf{h}} _{mki}^H{\bar {\bf{h}} _{mlb}}\bar {\bf{h}} _{mla}^H{\bar {\bf{h}} _{mkj}},\qquad\qquad\qquad\qquad\qquad\quad  l \notin {{\mathcal P}_k},\\
\!\!\sum\nolimits_{u_1}^N {\sum\nolimits_{u_2}^N { {{\mathrm {tr}} ( {\tilde {\bf{\Phi}} {{_{mk}^{u_1i}}}\tilde {\bf{\Phi}} _{ml}^{bu_2}\tilde {\bf{\Phi}} {{_{ml}^{u_2a}}}\tilde {\bf{\Phi}} _{mk}^{ju_1}} )} } }+\! \bar {\bf{h}} _{mki}^H{\bar {\bf{h}} _{mlb}}\bar {\bf{h}} _{mla}^H{\bar {\bf{h}} _{mkj}} ,\qquad  l \in {{\mathcal P}_k}\backslash \{ k\} ,\\
\!\!\sum\nolimits_{u_1}^N {\sum\nolimits_{u_2}^N { {{\mathrm {tr}} ( {\tilde {\bf{\Phi}} {{_{mk}^{u_1i}}}\tilde {\bf{\Phi}} _{ml}^{bu_2}\tilde {\bf{\Phi}} {{_{ml}^{u_2a}}}\tilde {\bf{\Phi}} _{mk}^{ju_1}} )}  } },\qquad\qquad\qquad\qquad\qquad\quad\,\,\, l = k.\\
\end{array} \right.
\end{split}
\end{equation}
\setcounter{equation}{\value{m2}}
\hrulefill
\vspace{-0.5cm}
\end{figure*}

\subsection{Precoding Matrix Design}

In order to improve the SE performance, the UL precoding can be implemented at the UE side in the scenario with multi-antenna UEs. Inspired by \cite{precoding_optimal}, for CF mMIMO, we propose a heuristic uplink precoding scheme based only on the correlation feature ${{\bf{U}}_{mk,\text{t}}}$ at the UE side as
\addtocounter{equation}{1}
\begin{equation}\label{F_precoding}
{{\bf{F}}_k} = \sqrt {{p_k}} \left\| {\sum\limits_{m = 1}^M {{{\bf{U}}_{mk,\text{t}}}} } \right\|_{\mathrm{F}}^{ - 1}\left( {\sum\limits_{m = 1}^M {{{\bf{U}}_{mk,\text{t}}}} } \right).
\end{equation}

\section{Numerical Results}

The $M$ APs and $K$ UEs are independently and uniformly distributed within a $1\times1\,\text{km}^2$ area with a wrap-around setting \cite{Network}. For ${\bar {\bf{H}} _{mk}}$, we assume that each AP and UE is equipped with a uniform linear array (ULA). The large-scale coefficient $\beta_{mk}$ between AP $m$ and UE $k$ is computed by the COST 321 Walfish-Ikegami model and other parameters are the same as those in \cite{performance}. Moreover, we generate ${{\bf{U}}_{mk,\text{r}}}$ and ${{\bf{U}}_{mk,\text{t}}}$ randomly in the simulations. We also generate ${{\bf{W}}_{mk}}$ randomly with one strong transmit eigendirection capturing the dominant power. Each UE transmits with the same power $200\,\text{mW}$. Each coherence block consists of ${\tau _c} \!=\! 200$ channel uses and $\tau_p \!=\! KN$, unless further specified.

Fig. \ref{tu1} shows the average SE as a function of $M$ for the fully centralized processing (we shortly call it as ``FCP'' in the following) and the LSFD over the MMSE (L-MMSE) combining and the MR combining with the precoding matrix in \eqref{F_precoding} applied. The labels ``$\times$'' generated by the analytical results in \eqref{L3 SE 2} coincide with the curve generated by simulations, which confirms the accuracy of our derived closed-from SE expressions. Also, we observe that the MMSE (L-MMSE) combining outperforms the MR combining for both the FCP and the LSFD, since the MMSE (L-MMSE) combining can efficiently adopt all antennas at each AP to suppress the co-channel interference.

\begin{figure}[t]
\centering
\includegraphics[scale=0.7]{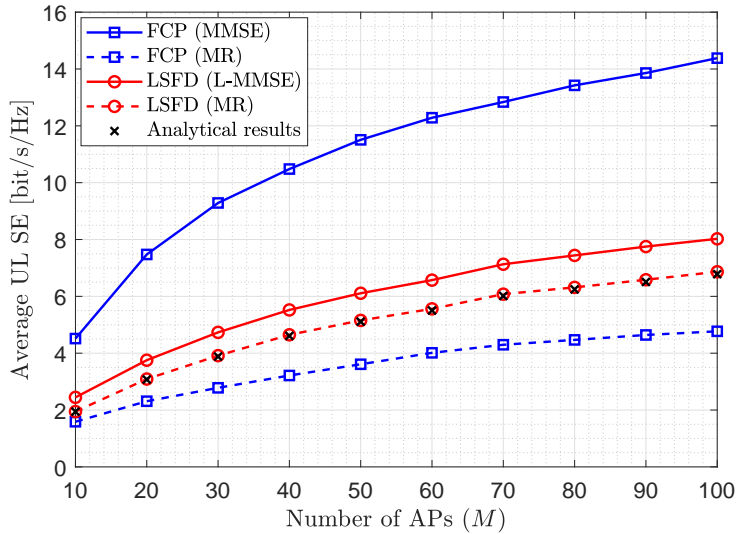}
\caption{Average SE versus the number of AP, $M$, for the LSFD and the FCP with the MMSE (L-MMSE) combining and the MR combining over $K=10$, $L=2$, $N=2$, and $\tau_p = KN/2$.}\label{tu1}
\vspace{-0.4cm}
\end{figure}

\begin{figure}[t]
\centering
\includegraphics[scale=0.7]{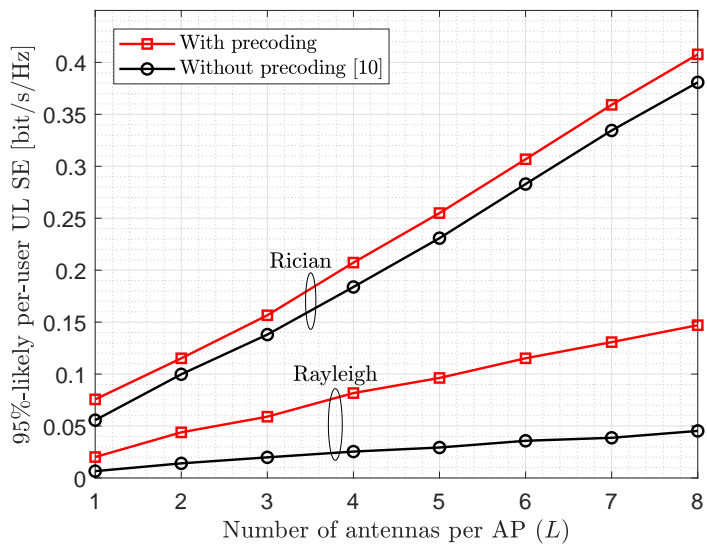}
\caption{95$\%$-likely per-user SE versus the number of antennas per AP, $L$, for the FCP with the MR combining over $M=10$, $K=10$, and $N=4$.}\label{tu2}
\vspace{-0.4cm}
\end{figure}

Fig. \ref{tu2} shows the $95\%\text{-likely}$ per-user SE as a function of $L$ for the FCP with the MR combining over Rician and Rayleigh channels. As observed, the proposed precoding scheme can achieve excellent SE performance in both Rician and Rayleigh channels. Moreover, the SE performance gap between ``with precoding'' and ``without precoding'' in Rician channels is smaller than that of Rayleigh channels (e.g. $13\%$ and $220\%$ for $L=4$, respectively), since the proposed precoding scheme is constructed by exploiting the eigenbasis of correlation matrices of the NLoS component, which only accounts for a relatively small proportion of channel power in Rician channels. Although \eqref{F_precoding} is a heuristic scheme inspired by the mMIMO scenario, this scheme shows the advantages for the implementation of precoding at the UE-side.

\begin{figure}[t]
\centering
\includegraphics[scale=0.7]{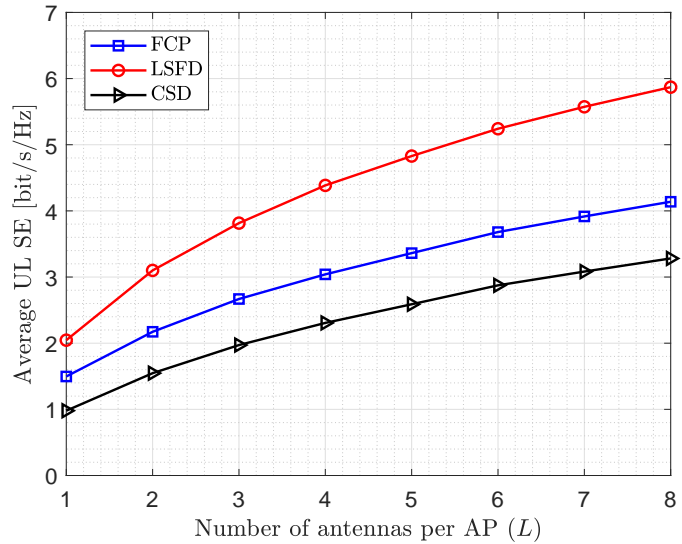}
\caption{Average SE versus the number of antennas per AP, $L$, for the MR combining with $M=20$, $K=10$, and $N=2$.}\label{tu4}
\vspace{-0.2cm}
\end{figure}

Fig. 4 shows the average SE as a function of $L$ for different processing schemes: the FCP, the LSFD, and the classic simple decoding (CSD)\footnote[2]{In the CSD, each AP locally estimates the channels and the CPU performs detection simply with the average of the local estimates from the APs, as proposed in \cite{smallcell}. Known as ``Level 2" in \cite{Making}, the CSD is regarded as a simplification of the LSFD.
} with the MR combining. We observe that the average SE for different processing schemes increases with $L$. Moreover, the average SE for the LSFD is almost twice as that of the CSD. The reason is that compared to the CSD, the CPU in the LSFD linearly weights the local estimates from all APs with the optimal LSFD coefficient matrices to maximize the SE. In particular, when the MR combining is applied, the FCP cannot effectively exploit the advantages brought by the fully centralized implementation. However, the LSFD can achieve excellent SE performance due to its two-layer processing scheme.

\begin{figure}[t]
\centering
\includegraphics[scale=0.7]{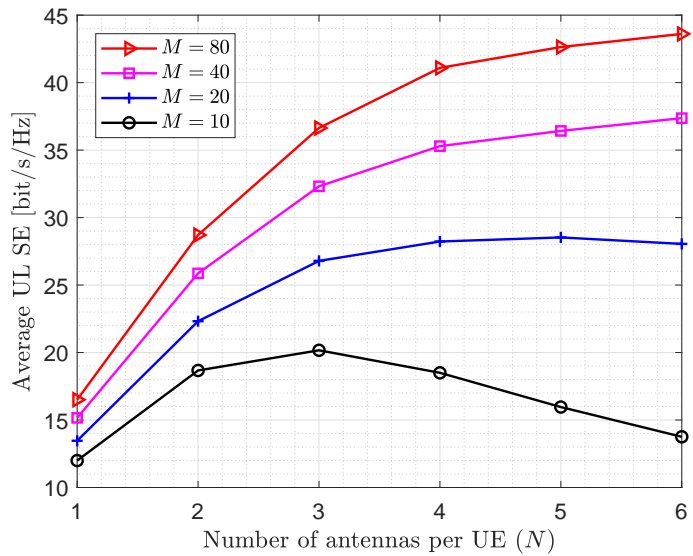}
\caption{Average SE versus the number of antennas per UE, $N$, for the FCP with the MMSE combining over $K=10$ and $L=4$.}\label{tu3}
\vspace{-0.4cm}
\end{figure}

Fig. \ref{tu3} shows the average SE as a function of $N$ for the FCP with the MMSE combining over different $M$. When $M=10$, the average SE increases to its maximum value and then decreases when $N$ increases. In fact, increasing $N$ would significantly reduce the pre-log factor $( \tau _c-\tau _p )/ \tau_c$ in all SE expressions. In particular, for small values of $M$, the decreases incurred by the pre-log factor outweigh the gain in having more UE antennas leading to the SE performance degradation. When $M$ increases, the SE performance can be improved by allowing each UE to be equipped with more antennas. In general, the SE can greatly benefit from a proper number of antennas per UE, especially with a large number of APs. For example, the average SE with $M = 20$, $N = 4$ achieves $110\%$ improvement compared to the single-antenna UE deployment. Moreover, the UEs in practical networks are usually equipped with a few antennas and the computational complexity to optimize the system is acceptable.

\vspace{-5mm}
\section{Conclusions}

We investigated the UL SE performance of CF mMIMO systems over Weichselberger Rician fading channels with phase-shifts and derived novel closed-form SE expressions for the LSFD scheme with the MR combining. Moreover, we proposed an effective UL precoding scheme based on only the eigenbasis of UE-side correlation matrices. The numerical results showed that the proposed precoding scheme is more efficient for Rayleigh channels to improve the SE performance compared with that of Rician channels. Furthermore, the system SE can greatly benefit from a proper number of antennas per UE, especially with a large number of APs.
\vspace{-5mm}
\appendix
\subsection{Proof of Corollary \ref{coro 2}}\label{appendix B}

Based on the theory of the optimal receiver \cite{Zhe_TWC}, the optimal receive combining, maximizing \eqref{L4 SE 1}, is $\mathbf{V}_{k}^{*}=( \mathbf{A}-\mathbf{BB}^H) ^{-1}\mathbf{B}$, where $\mathbf{A}=\sum_{l=1}^K{\mathbf{\hat{H}}_l\mathbf{\hat{F}}_l\mathbf{\hat{H}}_{l}^{H}}\!\!+\!\!\sum_{l=1}^K{\mathbf{C}_{l}^{\prime}}
+\sigma ^2\mathbf{I}_{ML}$ and $\mathbf{B}=\mathbf{\hat{H}}_k\mathbf{F}_k$. Except from having another scaling matrix $\mathbf{I}_N-( \mathbf{B}^H\mathbf{A}^{-1}\mathbf{B}+\mathbf{I}_{N} ) ^{-1}\mathbf{B}^H\mathbf{A}^{-1}\mathbf{B}$, which would not affect the value of \eqref{L4 SE 1}, $\mathbf{V}_{k}^{*}$ is equivalent to the MMSE combining matrix in \eqref{L4 MMSE combing}. So the MMSE combining matrix in \eqref{L4 MMSE combing} can also maximize the achievable SE \cite{Network}.

\subsection{Proof of Theorem \ref{thm_1}}\label{appendix C}

\sloppy
We have ${\bf Z}_{mk}\!=\!{\mathbb E}\{ {\hat {\bf{H}}_{mk}^H{{\hat{\bf{H}}}_{mk}}}\} $ with the $({i,j})$-th element given by $[{\bf Z}_{mk}]_{ij}\!\!=\! {\mathbb E}\{\hat {\bf{h}}_{mki}^H{\hat {\bf{h}}_{mkj}}\}\!\!=\! {\mathrm {tr}}( {{\bf{{\bf{\Phi}}}}_{mk}^{ji} \!+\! {{\bar {\bf{h}} }_{mkj}}\bar {\bf{h}} _{mki}^H})$. We structure ${{{\bf{\Theta}} _{kl}}}\!=\!{\mathbb E}\{ {{\bf{G}}_{kl}}{\hat {\bf{F}}_l}{{\bf{G}}_{kl}^H}  \}$ into blocks, where the $({i,j})$-th element of ${\bf{\Theta }}_{kl}^{mn}$ is
$\!\sum\nolimits_a^N{\sum\nolimits_b^N{{{[{{\hat{\bf{F}}_{l}}} ]}_{ba}}}}{\mathbb E}\{{\hat{\bf{h}}_{mki}^H{{\bf{h}}_{mlb}}{\bf{h}}_{nla}^H{{\hat {\bf{h}}}_{nkj}}}\}$.
Let $\mathbf{\tilde{\Phi}}_{mk}\!=\!\mathbf{\Phi }_{mk}^{1/2}$ and we can also structure $\mathbf{\tilde{\Phi}}_{mk}$ into $N^2$ blocks. So $\hat {\bf{h}}_{mki}^H$ can be rewritten as ${\hat {\bf{h}}_{mki}}\!=\!{\bar{\bf{h}}_{mki}}{e^{j{\varphi_{mk}}}}\!+\!\sum\nolimits_u^N {\tilde {\bf{{\bf{\Phi}}}}_{mk}^{iu}{\bf{x}}_{mk}^u} $ with $\tilde{\bf{{\bf{\Phi}}}}_{mk}^{iu}$ being the $({i,u})$-th submatrix of ${\tilde {\bf{{\bf{\Phi}}}}_{mk}}$ and ${\bf{x}}_{mk}^u \sim {{\mathcal {N}}_{\mathbb C}}({{\bf{0}},{\bf I}_L})$.
For $m = n$, we have $\mathbb{E}\{\mathbf{\hat{h}}_{mki}^{H}\mathbf{h}_{mlb}\mathbf{h}_{mla}^{H}\mathbf{\hat{h}}_{mkj}\}\!=\!\mathbb{E} \{\mathbf{\hat{h}}_{mki}^{H}\mathbf{\hat{h}}_{mlb}\mathbf{\hat{h}}_{mla}^{H}\mathbf{\hat{h}}_{mkj}\}\!+\!\mathbb{E} \{\mathbf{\hat{h}}_{mki}^{H}\mathbf{\tilde{h}}_{mlb}\mathbf{\tilde{h}}_{mla}^{H}\mathbf{\hat{h}}_{mkj}\}$, where the second term is given by ${\mathrm{tr}}({({{{\bar{\bf{h}}}_{mkj}}\bar{\bf{h}}_{mki}^H\!+\!{\bf{{\bf{\Phi}}}}_{mk}^{ji}})\cdot{{\bf{C}}_{ml}^{ba}}})$. If $l = k$, the first term is $\mathbf{\bar{h}}_{mki}^{H}\mathbf{\bar{h}}_{mlb}\mathbf{\bar{h}}_{mla}^{H}\mathbf{\bar{h}}_{mkj}%%
+\sum\nolimits_{u_1}^N{\sum\nolimits_{u_2}^N{\mathrm{tr}(\mathbf{\tilde{\Phi}}_{mk}^{u_1i}\mathbf{\tilde{\Phi}}_{ml}^{bu_2}
\mathbf{\tilde{\Phi}}_{ml}^{u_2a}\mathbf{\tilde{\Phi}}_{mk}^{ju_1})}}%%
+ \sum\nolimits_{u_1}^N{[\mathrm{tr}(\mathbf{\bar{h}}_{mki}^{H}\mathbf{\bar{h}}_{mlb}
\mathbf{\tilde{\Phi}}_{ml}^{u_1a}\mathbf{\tilde{\Phi}}_{mk}^{ju_1})%%
+ \mathrm{tr}(\mathbf{\tilde{\Phi}}_{mk}^{u_1i}\mathbf{\tilde{\Phi}}_{ml}^{bu_1}\mathbf{\bar{h}}_{mla}^{H}\mathbf{\bar{h}}_{mkj}) ]}%%
+ \sum\nolimits_{u_1}^N{\sum\nolimits_{u_2}^N{\mathbf{\bar{h}}_{mki}^{H}\mathbf{\tilde{\Phi}}_{ml}^{bu_1}
\mathbf{\tilde{\Phi}}_{ml}^{u_1a}\mathbf{\bar{h}}_{mkj}}}$. If $l \in {{\mathcal P}_k}\backslash \{ k \}$, due to the random characteristic of the phase-shifts, we have ${\mathbb E}\{{{{( {{{\bar {\bf{h}} }_{mki}}{e^{j{\varphi _{mk}}}}} )}^H}{{\bar {\bf{h}} }_{mlb}}{e^{j{\varphi _{ml}}}}}\}\!\!=\!\!{\mathbb E}\{{{{({{{\bar{\bf{h}}}_{mla}}{e^{j{\varphi_{ml}}}}})}^H}{{\bar {\bf{h}} }_{mkj}}{e^{j{\varphi _{mk}}}}} \} \!\!=\! 0$ compared to the scenario with $l = k$. If $l \notin {{\mathcal P}_k}$, we derive
${{\mathbb E}}\{ {\hat {\bf{h}}_{mki}^H{{\hat {\bf{h}}}_{mlb}}\hat {\bf{h}}_{mla}^H{{\hat {\bf{h}}}_{mkj}}} \}={\mathrm {tr}}( {( {{{\bar {\bf{h}} }_{mkj}}\bar {\bf{h}} _{mki}^H + {\bf{{\bf{\Phi}}}} _{mk}^{ji}} ) ( {{{\bar {\bf{h}} }_{mlb}}\bar {\bf{h}} _{mla}^H + {\bf{{\bf{\Phi}}}} _{ml}^{ba}} )} )$.
For $m \ne n$, we define ${\mathbb E}
\{{\hat{\bf{h}}_{mki}^H{{\bf{h}}_{mlb}}{\bf{h}}_{nla}^H{{\hat {\bf{h}}}_{nkj}}} \}\!=\!\gamma$. If $l = k$, we obtain $\gamma = {\mathrm {tr}}( {( {{{\bar {\bf{h}} }_{mkb}}\bar {\bf{h}} _{mki}^H + {\bf{{\bf{\Phi}}}} _{mk}^{bi}}){{( {{{\bar {\bf{h}} }_{nka}}\bar {\bf{h}} _{nkj}^H + {\bf{{\bf{\Phi}}}} _{nk}^{aj}} )}^H}})$. If $l \in {{\mathcal P}_k}\backslash \{ k \}$, we derive $\gamma =\sum\nolimits_{u_1}^N {\sum\nolimits_{u_2}^N {( {{\mathrm {tr}} ( {\tilde {\bf{\Phi}} {{_{mk}^{u_1i}}}\tilde {\bf{\Phi}} _{ml}^{bu_1}} )  {\mathrm {tr}} ( {\tilde {\bf{\Phi}} {{_{nl}^{u_2a}}}\tilde {\bf{\Phi}} _{nk}^{ju_2}} )} )} }$.
\vspace{-5mm}
\bibliographystyle{IEEEtran}

\bibliography{IEEEabrv,Ref}

\end{document}